\newcommand{\mnote}[1]
{\protect{\stepcounter{mnotecount}}$^{\mbox{\footnotesize
$
\bullet$\themnotecount}}$ \marginpar{
\raggedright\tiny\em
$\!\!\!\!\!\!\,\bullet$\themnotecount: #1} }
\newcounter{mnotecount}[section]
\renewcommand{\themnotecount}{\thesection.\arabic{mnotecount}}
\newtheorem{theorem}{\sc  Theorem\rm}[section]
\newtheorem{lemma}[theorem]{\sc Lemma\rm}
\newtheorem{remark}[theorem]{\sc Remark\rm}
\newcommand{\ol}[1]{\overline{#1}{}}
\newcommand{\jlcax}[1]{}
\newcommand{\eean}{\nonumber\end{eqnarray}}
\newcommand{\kk}[1]{}
\newcommand{\beq}{\begin{equation}}
\newcommand{\FS}       
                  {F}
\newcommand{\HS} 
       {H_{\mbox{\scriptsize volume}}}
\newcommand{\eeal}[1]{\label{#1}\end{eqnarray}}
\newcommand{\bed}{\begin{deqarr}}
\newcommand{\eed}{\end{deqarr}}
\newcommand{\bedl}[1]{\begin{deqarr}\label{#1}}
\newcommand{\eedl}[2]{\arrlabel{#1}\label{#2}\end{deqarr}}
\newcommand{\bel}[1]{\begin{equation}\label{#1}}
\newcommand{\bea}{\begin{eqnarray}}
\newcommand{\bean}{\begin{eqnarray}\nonumber}
\newcommand{\beal}[1]{\begin{eqnarray}\label{#1}}
\newcommand{\eea}{\end{eqnarray}}
\def\typeout{:<+ #.tex}\include{#}\typeout{:<-}1{\typeout{:<+ #1.tex}\include{#1}\typeout{:<-}}
\newcommand{\qed}{\hfill $\Box$ \medskip}
\newcommand{\be}{\begin{equation}}
\newcommand{\eeq}{\end{equation}}
\newcommand{\ee}{\end{equation}}
\newcommand{\beqa}{\begin{eqnarray}}
\newcommand{\eeqa}{\end{eqnarray}}
\newcommand{\beqan}{\begin{eqnarray*}}
\newcommand{\eeqan}{\end{eqnarray*}}
\newcommand{\ba}{\begin{array}}
\newcommand{\ea}{\end{array}}
\newcommand{\mcM}{{\mycal M}}
\newcommand{\scri}{{\mycal I}}%
\newcommand{\warn}[1]
{\protect{\stepcounter{mnotecount}}$^{\mbox{\footnotesize
$
\bullet$\themnotecount}}$ \marginpar{
\raggedright\tiny\em
$\!\!\!\!\!\!\,\bullet$\themnotecount: {\bf Warning:} #1} }
\newcommand{\eq}[1]{(\ref{#1})}
\newcommand{\ptc}[1]{\mnote{{\bf ptc:}#1}}
\newcommand{\mcL}{{\mycal L}}
\newcommand{\beqar}{\begin{deqarr}}
\newcommand{\eeqar}{\end{deqarr}}
\newcommand{\beaa}{\begin{eqnarray*}}
\newcommand{\eeaa}{\end{eqnarray*}}
\DeclareFontFamily{OT1}{rsfs}{}
\DeclareFontShape{OT1}{rsfs}{m}{n}{ <-7> rsfs5 <7-10> rsfs7 <10-> rsfs10}{}
\DeclareMathAlphabet{\mycal}{OT1}{rsfs}{m}{n}
\global\let\AddToReset=\@addtoreset}
\global\let\AddToReset=\@addtoreset}
\global\let\AddToReset=\@addtoreset}
\begin{document}

\title{Killing Initial Data on spacelike conformal boundaries%
\thanks{Preprint UWThPh-2013-8. }
}
\author{Tim-Torben Paetz%
\thanks{E-mail:  Tim-Torben.Paetz@univie.ac.at}  \vspace{0.5em}\\  \textit{Gravitational Physics, University of Vienna}  \\ \textit{Boltzmanngasse 5, 1090 Vienna, Austria }}

\maketitle

\vspace{-0.2em}

\begin{abstract}
We analyze Killing Initial Data on Cauchy surfaces in conformally rescaled vacuum space-times satisfying Friedrich's conformal field equations.
As an application, we derive the KID equations on a spacelike~$\scri^-$.
\end{abstract}

\noindent
\hspace{2.1em} PACs number: 04.20.Ex, 04.20.Ha

\tableofcontents

\section{Introduction}

Symmetries are of utmost importance in physics, and so is the construction of space-times $(\tilde{\mcM\enspace}\hspace{-0.5em},\tilde g)$ satisfying Einstein's field equations in general relativity which possess $k$-parameter groups of isometries, $1\leq k\leq 10$ when $\mathrm{dim}\tilde{\mcM\enspace}\hspace{-0.5em}=4$, generated by so-called
Killing vector fields. Indeed, such space-times can be systematically constructed in terms of an initial value problem when the usual constraint equations, which are required to be fulfilled by appropriately prescribed initial data, are supplemented by certain additional equations, the Killing Initial Data (KID) equations.

The KID equations have been derived on spacelike as well as characteristic initial surfaces (cf.\ \cite{beig,CP1} and references therein).
In \cite{ttp2} the same issue was analyzed for characteristic surfaces in conformally rescaled vacuum space-times satisfying Friedrich's conformal field equations.
In particular, for vanishing cosmological constant, the KID equations on a light-cone with vertex at past timelike infinity have been derived there.
The aim of this work is to carry out the corresponding analysis  on spacelike hypersurfaces in conformally rescaled vacuum space-times. As a special case we shall derive the KID equations
on $\scri^-$ supposing that the cosmological constant is positive so that $\scri^-$ is a spacelike hypersurface.

In Section~\ref{sec_setting} we recall the conformal field equations, discuss their gauge freedom and derive the constraint equations induced on $\scri^-$.
Well-posedness of the Cauchy problem for the conformal field equations  with data on $\scri^-$ was shown in \cite{F_lambda}, we shall provide an alternative proof
based on results proved in Appendix~\ref{app_alternative} by using a system of wave equations.

The ``unphysical Killing equations'', introduced in \cite{ttp2} replace, and are in fact equivalent to,
the original-space-time Killing equations in the unphysical space-time.
Employing results in \cite{ttp2} we derive in Section~\ref{sec_KIDs} necessary-and-sufficient conditions on a spacelike hypersurface in a space-time
satisfying the conformal field equations which guarantee  existence of a vector field fulfilling these equations  (cf.\ Theorem~\ref{KID_eqns_main_scri2}).
Similar to the proceeding in \cite{CP1,ttp2} we first derive an intermediate result, Theorem~\ref{KID_eqns_main}, with  a couple of additional hypotheses,
which then are shown to be automatically satisfied.

In Section~\ref{sec_scri} we apply Theorem~\ref{KID_eqns_main_scri2} to the special case where the spacelike hypersurface is $\scri^-$.
We shall see that some of the KID equations determine a set of candidate fields on $\scri^-$. Whether or not these fields extend to vector fields satisfying the
unphysical Killing equations depends on the remaining ``reduced KID equations". As for a light-cone with vertex at past timelike infinity it turns out that
the  KID equations adopt at infinity a significantly simpler form as compared to ``ordinary'' Cauchy surfaces (cf.\ Theorem~\ref{KID_eqns_main_hyp_infinity}).

\section{Setting}
\label{sec_setting}

\subsection{Conformal field equations}

In $3+1$ dimensions Friedrich's \textit{metric conformal field equations (MCFE)} (cf.~\cite{F3})%
\footnote{It is indicated in \cite{ttp2} that things are considerably different in higher dimensions, which is why we restrict attention to 4 dimensions from
the outset.}
\begin{eqnarray}
 && \nabla_{\rho} d_{\mu\nu\sigma}{}^{\rho} =0\;,
 \label{conf1}
\\
 && \nabla_{\mu} L_{\nu\sigma} - \nabla_{\nu}L_{\mu\sigma} =\nabla_{\rho}\Theta \, d_{\nu\mu\sigma}{}^{\rho}\;,
 \label{conf2}
\\
 && \nabla_{\mu}\nabla_{\nu}\Theta = -\Theta L_{\mu\nu} + s g_{\mu\nu}\;,
 \label{conf3}
\\
 && \nabla_{\mu} s = -L_{\mu\nu}\nabla^{\nu}\Theta\;,
 \label{conf4}
\\
 && 2\Theta s - \nabla_{\mu}\Theta\nabla^{\mu}\Theta = \lambda/3 \;,
 \label{conf5}
\\
 && R_{\mu\nu\sigma}{}^{\kappa}[ g] = \Theta d_{\mu\nu\sigma}{}^{\kappa} + 2(g_{\sigma[\mu} L_{\nu]}{}^{\kappa}  - \delta_{[\mu}{}^{\kappa}L_{\nu]\sigma} )
 \label{conf6}
\end{eqnarray}
form a closed system of equations for the unknowns
$g_{\mu\nu}$, $\Theta$, $s$, $ L_{\mu\nu}$ and $d_{\mu\nu\sigma}{}^{\rho}$.
The tensor field $L_{\mu\nu}$ denotes the Schouten tensor,
\begin{equation}
  L_{\mu\nu} \,=\, \frac{1}{2}R_{\mu\nu} - \frac{1}{12} R g_{\mu\nu}
 \;,
\label{dfn_L}
\end{equation}
while
\begin{equation}
d_{\mu\nu\sigma}{}^{\rho}\,=\, \Theta^{-1}C_{\mu\nu\sigma}{}^{\rho}
\label{dfn_d}
\end{equation}
 is a rescaling of the conformal Weyl tensor $C_{\mu\nu\sigma}{}^{\rho}$.
The function $s$ is defined as
\begin{equation}
s \,=\, \frac{1}{4}\Box_g\Theta + \frac{1}{24} R\Theta
\;.
\label{dfn_s}
\end{equation}
%

Friedrich has shown that the MCFE are equivalent to Einstein's vacuum field equations with cosmological constant $\lambda$ in regions where the conformal factor $\Theta$, relating the ``unphysical'' metric $g=\Theta^2g_{\mathrm{phys}}$ with the physical metric $g_{\mathrm{phys}}$,  is positive. Their advantage lies in the property that they remain regular even where $\Theta$ vanishes.

The system \eq{conf1}-\eq{conf6} treats $s$, $L_{\mu\nu}$ and $d_{\mu\nu\sigma}{}^{\rho}$ as independent of $g_{\mu\nu}$ and $\Theta$.
However, once a solution of the MCFE has been given these fields are related to $g_{\mu\nu}$ and $\Theta$ via \eq{dfn_L}-\eq{dfn_s}.
A solution of the MCFE is thus completely determined by the pair
 $(g_{\mu\nu},\Theta)$.

\subsection{Gauge freedom}
\label{sec_gauge_freedom}

\subsubsection{Conformal factor}

Let  $(g_{\mu\nu}, \Theta, s, L_{\mu\nu}, d_{\mu\nu\sigma}{}^{\rho})$ be some smooth solution of the MCFE.%
\footnote{For convenience we restrict attention throughout to the smooth case, though similar results can be obtained assuming finite differentiability.}
From $g_{\mu\nu}$ we compute $R$. Let us then conformally rescale the metric, $g\mapsto \phi^2g$, for some positive function $\phi> 0$.
The Ricci scalars $R$ and $R^*$ of $g$ and $\phi^2g$, respectively, are related via (set $\Box_g:= g^{\mu\nu}\nabla_{\mu}\nabla_{\nu}$)
\begin{equation}
 \phi R - \phi^3 R^* = 6\Box_g\phi
 \;.
  \label{R_R*}
\end{equation}
Now, let us \textit{prescribe} $R^*$ and read \eq{R_R*} as an equation for $\phi$.
When dealing with a Cauchy problem with data on some spacelike hypersurface $\mathcal{ H}$ (including $\scri^-$ for $\lambda>0$) we are free to prescribe functions $\phi|_{\mathcal{H}}=:\mathring \phi>0$ and  $\partial_0\phi|_{\mathcal{H}}=:\mathring \psi$ on $\mathcal{H}$.%
\footnote{The positivity-assumption on $\mathring \phi$ makes sure that the solution of \eq{R_R*} is positive sufficiently close to $\mathcal{H}$  and thereby that the new conformal factor $\Theta^*$  is positive as well (in the $\scri^-$-case just off the initial surface).
}
%
Throughout $x^0\equiv t$ denotes a time-coordinate so that $\partial_0$ is transverse to $\mathcal{H}$.
According to standard results  there exists a unique solution $\phi>0$  in some neighborhood of $\mathcal{H}$ which induces the above data on $\mathcal{H}$.
The MCFE are conformally covariant, meaning that the conformally rescaled fields
\begin{eqnarray}
 g^* &=& \phi^2 g\;,
 \label{conformal_covariance1}
 \\
 \Theta^* &=& \phi \,\Theta\;,
 \\
 s^* &=& \frac{1}{4}\Box_{g^*}\Theta^* + \frac{1}{24}R^*\Theta^* \;,
   \\
   L^*_{\mu\nu} &=& \frac{1}{2} R^*_{\mu\nu}[g^*] - \frac{1}{12}R^* g^*_{\mu\nu} \;,
   \\
   d^*_{\mu\nu\sigma}{}^{\rho} &=& \phi^{-1} d_{\mu\nu\sigma}{}^{\rho}\;,
    \label{conformal_covariance5}
  \end{eqnarray}
provide another solution of the MCFE, now with Ricci scalar $R^*$, which represents the same physical solution:
If the conformal factor $\Theta$ is treated as an unknown, determined by the MCFE,
the unphysical Ricci scalar $R$ can be arranged to adopt any preassigned form,
 it represents a \textit{conformal gauge source function}. 

There remains the gauge freedom to prescribe the functions $\mathring \phi$ and $\mathring \psi$ on $\mathcal{H}$.
On an ordinary hypersurface, where $\Theta$ has no zeros, this freedom can be used to prescribe 
$\Theta|_{\mathcal{H}}$ and $\partial_0\Theta|_{\mathcal{H}}$.
A main object of this work is to treat the case  $\mathcal{H}=\scri^-$, where, by definition, $\Theta =0$ (and $\mathrm{d}\Theta\ne 0$).
We shall show that in this situation the gauge freedom allows one to prescribe the function $s$  on $\scri^-$ and
to make  conformal rescalings of the induced metric on $\scri^-$.

To see this  we consider a smooth solution 
of the MCFE
to the future of $\scri^-$.
Now \eq{conf5} and  $\mathrm{d}\Theta|_{\scri^-}\ne 0$ enforce $\ol g{}^{00}< 0$ (hence, as is well known, $
\scri^-$ must be spacelike when $\lambda>0$).
Due to \eq{conf5}, the function $s$ can be written away from $\scri^-$ as
\[ s= \frac{1}{2}\Theta^{-1} \nabla_{\mu}\Theta\nabla^{\mu}\Theta + \frac{1}{6}\Theta^{-1}\lambda\;, \]
and the right-hand side is smoothly extendable at $\scri^-$.
A conformal rescaling
\begin{eqnarray}
\Theta \mapsto \Theta^*:=\phi\,\Theta\;, \quad  g_{\mu\nu} \mapsto g^*_{\mu\nu}:=\phi^2 g_{\mu\nu}\;, \quad \phi>0\;,
\label{rescaling}
\end{eqnarray}
maps the function $s$ to
\begin{equation}
 s^* = \phi^{-1}\Big( \frac{1}{2}\Theta \phi^{-2}\nabla^{\mu}\phi \nabla_{\mu}\phi + \phi^{-1} \nabla^{\mu}\Theta \nabla_{\mu}\phi + s\Big)\;.
\label{trafo_s_s'}
\end{equation}
The trace of this equation on $\scri^-$ is
\begin{eqnarray}
 \overline{\nabla^{\mu}\Theta\nabla_{\mu} \phi +  \phi\,  s - \phi^2 s^*}=0
  \;,
\end{eqnarray}
or, in coordinates adapted to $\scri^-$, i.e.\ for which $\scri^- = \{ x^0\equiv t =0\}$ locally,
\begin{eqnarray}
\ol{g^{0\mu} \nabla_{0}\Theta\nabla_{\mu} \phi +  \phi\,  s - \phi^2 s^*}=0
  \;.
\label{eqn_phi}
\end{eqnarray}
Here and henceforth we use overlining to denote restriction to the initial surface.
Let us prescribe $\overline s^*$ .
We choose any $\mathring \phi>0$ to conformally rescale the induced metric on $\scri^-$.
Then we solve \eq{eqn_phi} for $\mathring\psi\equiv \ol{\nabla_0 \phi}$ (recall that$\ol {\nabla_0 \Theta}$ and    $\ol g{}^{00}$ are not allowed to have zeros on $\scri^-$).
We take the so-obtained functions $\mathring\phi>0$ and $\mathring \psi$  as initial data for \eq{R_R*}.

By way of summary, the conformal covariance of the MCFE  comprises a gauge freedom due to which
the functions $R$ and $s|_{\scri^-}$ can  be regarded as gauge source functions, and due to which only the conformal class of
 the induced metric on $\scri^-$ matters.

\subsubsection{Coordinates}

It is well-known (cf.\ e.g.\ \cite{c_lecture}) that the freedom to choose coordinates near
 a spacelike hypersurface $\mathcal{H}=\{x^0=0\}$ with induced Riemannian metric $h_{ij}$ can be employed to prescribe
\begin{equation}
 \ol g^{00}<0   \quad \text{and}\quad   \ol g^{0i}
\;.
\end{equation}
Equivalently, one may prescribe
\begin{equation}
 \ol g_{00}   \quad \text{and}\quad   \ol g_{0i} \quad \text{such that} \quad \ol g_{00} -  \ol h^{ij} \ol g_{0i}\ol g_{0j} <0
\;.
\end{equation}
The remaining  freedom to choose coordinates off
the initial surface is comprised in the \textit{$\hat g$-generalized wave-map gauge condition}
\begin{equation}
H^{\sigma}=0
\end{equation}
with
\begin{equation}
  H^{\sigma} := g^{\alpha\beta}(\Gamma^ {\sigma}_{\alpha\beta} -\hat\Gamma^{\sigma}_{\alpha\beta} ) -  W^{\sigma}
\end{equation}
being the \textit{generalized wave-gauge vector}.
Here $\hat g_{\mu\nu}$ denotes some \textit{target metric}, $\hat\Gamma^{\sigma}_{\alpha\beta}$ are the Christoffel symbols of  $\hat g_{\mu\nu}$.
More precisely, the gauge freedom is captured by the vector field
\begin{equation*}
 W^{\sigma} = W^{\sigma}(x^{\mu}, g_{\mu\nu},  s,\Theta, L_{\mu\nu}, d_{\mu\nu\sigma}{}^{\rho},\hat g_{\mu\nu})
\end{equation*}
which can be  arbitrarily prescribed. In fact, within our setup, it can be allowed to depend upon the coordinates, and possibly upon $g_{\mu\nu}$ as well as all other fields which appear in the MCFE,
but not upon derivatives thereof.

\subsubsection{Realization of the gauge scheme}
\label{sec_realization}

Given some smooth solution of the MCFE and a new choice of gauge functions  $R$, $\ol s$, $W^{\sigma}$, $\ol g_{0\mu}$, as well as a  conformal factor $\Omega>0$ by which one wants to rescale the induced metric $\ol g_{ij}$,
a transformation into the new gauge is realized as follows:

In the first step we set $\mathring \phi := \Omega$ and solve \eq{eqn_phi} for
$\mathring\psi\equiv \ol{\nabla_0 \phi}$, which gives us the relevant initial data for \eq{R_R*} which we then solve.
This way $\ol s$ and $R$ take their desired values, and a new representative $\Omega^2 \ol g_{ij}$ of the conformal class  of the induced metric
on $\scri^-$ is selected.
Then the coordinates are transformed in such a way that the metric takes the prescribed values for
$\ol g_{0\mu}$ on $\scri^-$.
Finally we just need to solve another wave equation to obtain $H^{\sigma}=0$ for the given vector field
$W^{\sigma}$.


\subsection{Constraint equations in the $(R=0, W^{\lambda}=0, \ol s=0,  \ol g_{00}=-1, \ol g_{0i}=0, \hat g_{\mu\nu} = \ol g_{\mu\nu})$-wave map gauge}
\label{subsec_gauge}

In the following we aim to derive the constraint equations for
 the fields  $g_{\mu\nu}$, $\Theta$, $s$, $L_{\mu\nu}$, $d_{\mu\nu\sigma}{}^{\rho}$ on $\scri^-$ as well as their transverse derivatives
induced by the MCFE on a spacelike $\scri^-$ in adapted coordinates $(x^0=t, x^i)$ with $\scri^-=\{t=0\}$. The surface $\scri^-$ is characterized by
\begin{equation}
\ol \Theta \,=\,0 \quad \text{and} \quad  \ol{\mathrm{d}\Theta} \ne 0
\;.
\end{equation}
Note that for $\scri^-$ to be spacelike a positive cosmological constant $\lambda>0$ is required.
The constraint equations will be relevant for the derivation of the KID equations in Section~\ref{sec_scri}.

To simplify computations we make the specific gauge choice
\begin{eqnarray}
 R=0\;, \quad \ol s=0\;, \quad \ol g_{00}=-1\;, \quad \ol g_{0i}=0\;, \quad  W^{\sigma}=0\;, \quad \hat g_{\mu\nu} = \ol g_{\mu\nu}
\;.
\label{gauge_conditions_compact}
\end{eqnarray}
(Note that the target metric is taken to be $\ol g_{\mu\nu}$ for all $t$.)
We shall show that appropriate data to solve the constraint equations are $\ol g_{ij}$ and $\ol d_{0i0j}$, where the latter  field  needs to satisfy a vector and a scalar constraint equation.

Let us start with a list of all the Christoffel symbols in adapted coordinates
\begin{eqnarray}
 &\ol \Gamma^k_{ij} = \tilde\Gamma^k_{ij}\;, \quad  \ol \Gamma^0_{ij} =  \frac{1}{2}\ol{\partial_0 g_{ij}}\;, \quad  \ol \Gamma^0_{0i} = 0\;,&
\label{christoffel1}
\\
& \ol \Gamma^0_{00} =-\frac{1}{2}\ol{\partial_0 g_{00}}\;, \quad  \ol \Gamma^k_{00} =  \ol g^{kl}\ol{\partial_0 g_{0l}}\;, \quad  \ol \Gamma^k_{0i} = \frac{1}{2}\ol g^{kl}\ol{\partial_0 g_{il}}\;,&
\label{christoffel2}
\end{eqnarray}
where the $ \tilde\Gamma^k_{ij}$'s denote the Christoffel symbols of the Riemannian metric $\tilde g= \ol g_{ij}\mathrm{d}x^i\mathrm{d}x^j$.
Throughout we shall use $\tilde . $ to denote fields such as the  Riemann tensor,  the Levi-Civita connection etc.\ associated to $\tilde g$.

Evaluation  of \eq{conf5} on $\scri^-$ gives
\begin{equation}
 \ol{\nabla_0\Theta} = \sqrt{\lambda / 3}
\;.
\end{equation}
The $(\mu\nu)=(00)$-component of \eq{conf3} implies
\begin{equation}
\ol{\nabla_0\nabla_0\Theta} \,=\,0
\;,
\end{equation}
while the $(\mu\nu)=(ij)$-components of \eq{conf3} yield
\begin{equation}
 0 = \ol{\nabla_i\nabla_j\Theta} = - \ol \Gamma^0_{ij} \ol{\nabla_0\Theta} = - \sqrt{\frac{\lambda}{12}}\, \ol{\partial_0 g_{ij}}
\;.
\end{equation}
We compute the $(\mu\nu\sigma\kappa)=(ikjk)$-components of \eq{conf6},
\begin{eqnarray*}
\ol R_{ikj}{}^{k}= \ol L_{ij}  + g_{ij}\ol g^{kl}\ol L_{kl}
\;,
\end{eqnarray*}
where
\begin{eqnarray*}
 \ol R_{ikj}{}^{k} \,=\, \partial_k \ol \Gamma^k_{ij} - \partial_i \ol\Gamma^k_{jk} + \ol \Gamma^{\alpha}_{ij}\ol \Gamma^k_{\alpha k} - \ol \Gamma^{\alpha}_{ik}\ol \Gamma^k_{j\alpha}
\,=\, \tilde   R_{ikj}{}^{k} \,=\, \tilde R_{ij}
\;.
\end{eqnarray*}
Hence
\begin{equation}
 \ol L_{ij} \,=\, \tilde R_{ij} -  \frac{1}{4}\ol g_{ij}\tilde R \,=\, \tilde L_{ij}
\;,
\label{constraint_Lij}
\end{equation}
where $\tilde L_{ij}$ is the \textit{Schouten tensor} of $\tilde g$.
The gauge conditions \eq{gauge_conditions_compact} imply
\begin{equation}
 0 =\frac{1}{6}\ol R = \ol g^{\mu\nu}\ol L_{\mu\nu} = \ol g^{ij} \ol L_{ij} - \ol L_{00} =\frac{1}{4}\tilde R -\ol L_{00}
\;.
\end{equation}
From the $\mu=i$-component of \eq{conf4} we deduce
\begin{equation}
 \ol L_{0i} =0
\;.
\end{equation}
Next, we employ the wave-map gauge condition to obtain
\begin{eqnarray*}
 0 &=& \ol H^k = g^{\alpha\beta}(\Gamma^k_{\alpha\beta}- \hat\Gamma^k_{\alpha\beta}) = -\ol \Gamma^k_{00} = -\ol g^{kl}\ol{\partial_0 g_{0l}}
\;,
\\
 0 &=& \ol H^0 = g^{\alpha\beta}(\Gamma^0_{\alpha\beta}- \hat\Gamma^0_{\alpha\beta}) =-\ol \Gamma^0_{00} = \frac{1}{2}\ol{\partial_0 g_{00}}
\;.
\end{eqnarray*}
Altogether we have found that
\begin{equation}
 \ol{\partial_0 g_{\mu\nu}} \,=\, 0 \;.
\end{equation}
Thus \eq{christoffel1}-\eq{christoffel2} simplify to
\begin{eqnarray}
 \ol \Gamma^k_{ij} = \tilde\Gamma^k_{ij}\;, \quad  \ol \Gamma^0_{ij} =    \ol \Gamma^0_{0i} =\ol \Gamma^0_{00} =  \ol \Gamma^k_{00} =  \ol \Gamma^k_{0i} = 0\;.
\label{christoffel}
\end{eqnarray}

We have
\begin{eqnarray*}
 \ol R_{ij} &\equiv&  \ol{\partial_{\mu}\Gamma^{\mu}_{ij}} - \ol{\partial_i \Gamma^{\mu}_{j\mu}} + \ol \Gamma^{\alpha}_{ij}\ol \Gamma^{\mu}_{\alpha\mu} - \ol \Gamma^{\alpha}_{i\mu}\ol \Gamma^{\mu}_{j\alpha}
\\
&=& \tilde R_{ij} +  \ol{\partial_{0}\Gamma^{0}_{ij}}
=  \tilde R_{ij} + \frac{1}{2}\ol{\partial_0\partial_0 g_{ij}}
\;.
\end{eqnarray*}
Hence
\begin{eqnarray}
 \ol{\partial_0\partial_0 g_{ij}} &=& 4\ol L_{ij} - 2\tilde R_{ij} \,=\,  2\tilde R_{ij} -  g_{ij}\tilde R
\;.
\end{eqnarray}
If we evaluate the $\mu=0$-component of \eq{conf4} on $\scri^-$ we are led to,
\begin{equation}
 \ol{\nabla_0s} = \ol L_{00}\ol{\nabla_0 \Theta} =  \sqrt{\frac{\lambda}{48}} \,\tilde R
\;.
\end{equation}
The $(\mu\nu\sigma)=(0i0)$-components of \eq{conf2} yield
\begin{equation}
 \ol{\nabla_0 L_{0i}} = \nabla_i \ol L_{00}  = \frac{1}{4}\tilde\nabla_i\tilde R
\;.
\end{equation}
Moreover, for  $(\mu\nu\sigma)=(jki)$ we obtain
\begin{eqnarray}
 \ol d_{0ijk}  &=&    \sqrt{\frac{12}{\lambda}} \,\tilde\nabla_{[k}\ol L_{j] i}
\,=\,
 \sqrt{\frac{3}{\lambda}} \tilde C_{ijk}
\;,
\end{eqnarray}
where $\tilde C_{ijk}$ is the \textit{Cotton tensor} of $\tilde g$.
For  $(\mu\nu\sigma)=(0ji)$ we find
\begin{eqnarray}
\ol{\nabla_{0}L_{ ij}}  &=&  -\sqrt{\lambda/3} \, \ol d_{0i 0j}
\;.
\end{eqnarray}
The gauge condition $R=0$ together with the tracelessness of the rescaled Weyl tensor then imply
\begin{equation}
0 \,=\, \ol g^{\mu\nu} \ol{\nabla_0 L_{\mu\nu}} \,=\, \ol g^{ij} \ol{\nabla_0 L_{ij}}  -  \ol{ \nabla_0 L_{00}} \,=\, -  \ol{ \nabla_0 L_{00}}
\;.
\end{equation}
Via the second Bianchi identity  the $(\mu\nu\sigma)=(0ij)$-components of \eq{conf1} become
\begin{eqnarray}
 \ol{\nabla_{0} d_{0i0j}}&=&- \tilde \nabla^{k}\ol  d_{0ijk}
 \,=\, -\sqrt{\frac{3}{\lambda}}\tilde \nabla^{k} \tilde C_{ijk}
 \,=\,\sqrt{\frac{3}{\lambda}}\tilde B_{ij}
\;,
\end{eqnarray}
where $\tilde B_{ij}$ denotes the \textit{Bach tensor} of $\tilde g$.
The  $(\mu\nu\sigma)=(kji)$-components give
\begin{equation}
\ol{\nabla_{0} d_{0ijk}} \,= \, - \tilde \nabla^{l} \ol d_{jkil}  \,=\,  2\tilde \nabla_{[j}\ol d_{k]0i0} - 2g_{i[j}\tilde \nabla^{l}\ol d_{k]0l0}
\;.
\end{equation}
Here  we used that due to the algebraic symmetries of the rescaled Weyl tensor
\begin{eqnarray}
\ol d_{ijkl} &=& 2\ol  g^{mn}(\ol g_{k[i}\ol d_{j]mln} - \ol  g_{l[i}\ol d_{j]mkn} - \ol g_{k[i}g_{j]l}\ol g^{pq} \ol d_{pmqn})
\nonumber
\\
&=&  2(\ol g_{k[i}\ol d_{j]0l0} - \ol  g_{l[i}\ol d_{j]0k0} )
\;.
\label{spatial_tr}
\end{eqnarray}
The $(\mu\nu\sigma)=(0i0)$-components of \eq{conf1} imply a \emph{vector constraint} for $\ol d_{0i0j}$,
\begin{equation}
  \tilde\nabla^{j} \ol d_{0i0j} \,=\,0
\;.
\end{equation}
(A \emph{``scalar constraint''}, which has already been used in the derivation of the constraint equations,
is simply given by the tracelessness-requirement on the rescaled Weyl tensor,
\begin{equation}
 \ol g^{ij}\ol d_{0i0j} \,=\,   \ol g^{\mu\nu}\ol d_{0\mu 0\nu} \,=\,   0
\;.)
\end{equation}

To sum it up, we have the following analogue of a result of Friedrich~\cite{F_lambda}:
The free data can be identified with a Riemannian metric $h_{ij}:=\ol g_{ij}$ and a symmetric tensor field $D_{ij}:= \ol d_{0i0j}$
on $\scri^-$ satisfying
\begin{equation}
 h^{ij} D_{ij}=0 \quad \text{and} \quad \tilde\nabla^j D_{ij}=0
\label{constr1}
\end{equation}
(that these are indeed the free data follows e.g.\ from the considerations in Appendix~\ref{app_alternative}).
Then the MCFE enforce on $\scri^-$ in the  $(R=0, \ol s=0,  \ol g_{00}=-1, \ol g_{0i}=0, \hat g_{\mu\nu} = \ol g_{\mu\nu})$-wave-map gauge,
\begin{eqnarray}
 &\ol g_{00} =-1\;, \quad \ol g_{0i}=0\;, \quad \ol g_{ij} = h_{ij}\;, \quad \ol{\partial_0 g_{\mu\nu}}=0\;,
\label{constr2}
&
\\
& \ol \Theta =0\;, \quad \ol{\partial_0 \Theta} = \sqrt{\frac{\lambda}{3}}\;,
\label{constr3}
&
\\
&\ol s=0\;, \quad \ol{\partial_0 s} =  \sqrt{\frac{\lambda}{48}} \,\tilde R\;,
\label{constr4}
&
\\
&\ol L_{ij} = \tilde L_{ij} \;, \quad \ol L_{0i}=0\;, \quad \ol L_{00} = \frac{1}{4}\tilde R \;,
\label{constr5}
&
\\
& \ol{\partial_0 L_{ij}} =  -\sqrt{\frac{\lambda}{3}} \, D_{ij}\;, \quad
  \ol{\partial_0 L_{0i}} =\frac{1}{4}\tilde\nabla_i\tilde R  \;, \quad  \ol{\partial_0 L_{00}} =0\;,
\label{constr6}
&
\\
&  \ol d_{0i0j} = D_{ij}\;,\quad \ol d_{0ijk}  =  \sqrt{\frac{3}{\lambda}} \tilde C_{ijk}\;,
\label{constr7}
&
\\
& \ol{\partial_{0} d_{0i0j}}= \sqrt{\frac{3}{\lambda}}\tilde B_{ij}\;, \quad
\ol{\partial_{0} d_{0ijk}} = 2\tilde \nabla_{[j}D_{k]i}\;.
\label{constr8}
 &
\end{eqnarray}
Note that due to \eq{christoffel} the actions of $\nabla_0$ and $\partial_0$, as well as $\nabla_i$ and $\tilde\nabla_i$, respectively, coincide on $\scri^-$,
so we can use them interchangeably.

We have seen in Section~\ref{sec_gauge_freedom} (cf.\ also \cite{F_lambda}) that there remains a gauge freedom to conformally rescale the induced metric on $\scri^-$. Due to this freedom 
the pairs $(h_{ij},D_{ij})$ and $(\Omega^2 h_{ij}, \Omega^{-1} D_{ij})$,
with $\Omega$ some positive function, generate the same physical space-times.
With regard to the constraint equations we note that $\Omega^{-1}D_{ij}$ is trace- and divergence-free w.r.t.\ $\Omega^2 h_{ij}$ whenever $D_{ij}$ is w.r.t.~$h_{ij}$.

In the following we shall write $[h_{ij},D_{ij}]$ if this gauge freedom is left unspecified and if we merely want to refer to the
conformal classes of $h_{ij}$ and~$D_{ij}$.

\subsection{Well-posedness of the Cauchy problem on a spacelike~$\scri^-$}

In \cite{ttp} a system of conformal wave equations (CWE) has been derived from the MCFE.
In Appendix~\ref{app_alternative} it is shown that a solution of the CWE, equations \eq{cwe1}-\eq{cwe5},  is a solution of the MCFE if and only if the constraint equations \eq{constr1}-\eq{constr8} are satisfied.
Using standard well-posedness results about wave equations we thereby recover a result due to Friedrich \cite{F_lambda} who proved well-posedness of the Cauchy problem
on $\scri^-$ (Friedrich used a representation of the MCFE as a symmetric hyperbolic system, in some situations, however, it might be advantageous
to deal with a system of wave equations instead \cite{kreiss}). 
We restrict attention to the smooth case (for a version with finite differentiability see \cite{F_lambda}):

\begin{theorem}
\label{thm_well-posedness}
Let $\mathcal{H}$ be a 3-dimensional smooth manifold. Let $h_{ij}$ be a smooth Riemannian metric and let $D_{ij}$ be a smooth symmetric, trace- and divergence-free tensor field on $\mathcal H$.
Moreover, assume a positive cosmological constant $\lambda>0$.
Then there exists an  (up to isometries)
unique smooth space-time $(\mcM, g,\Theta)$ with the following properties:
\begin{enumerate}
\item[(i)]  $(\mcM, g,\Theta)$ satisfies the MCFE \eq{conf1}-\eq{conf6},
\item[(ii)] $\Theta|_\mathcal{H}=0$ and  $\mathrm{d}\Theta|_\mathcal{H}\ne 0$, i.e.\ $\mathcal{H}=\scri^-$ (and $\Theta$ has no zeros away from and sufficiently close to $\mathcal{H}$),
\item[(iii)] $g_{ij}|_{\mathcal{H}} = h_{ij}$, $d_{0i0j}|_{\mathcal{H}}=D_{ij}$.
\end{enumerate}
The isometry class of the space-time does not change if the initial data are replaced by $(\hat h_{ij}, \hat D_{ij})$
with $[\hat h_{ij},\hat D_{ij}] =[ h_{ij},D_{ij}]$.
\end{theorem}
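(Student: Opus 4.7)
The plan is to reduce the Cauchy problem for the MCFE to a Cauchy problem for the system of conformal wave equations (CWE) of \cite{ttp}, exploit the equivalence established in Appendix~\ref{app_alternative}, and then invoke standard hyperbolic theory. First, I fix the gauge scheme of Section~\ref{subsec_gauge}: prescribe $R=0$, $\overline{s}=0$, $\overline{g}_{00}=-1$, $\overline{g}_{0i}=0$, $W^\sigma=0$, $\hat g_{\mu\nu}=\overline{g}_{\mu\nu}$. Using the constraint analysis carried out in Section~\ref{subsec_gauge}, the abstract free data $(h_{ij},D_{ij})$ uniquely determine, on $\mathcal{H}$, the full set of Cauchy data $(\overline{g}_{\mu\nu},\overline{\partial_0 g_{\mu\nu}},\overline{\Theta},\overline{\partial_0\Theta},\overline{s},\overline{\partial_0 s},\overline{L}_{\mu\nu},\overline{\partial_0 L_{\mu\nu}},\overline{d}_{\mu\nu\sigma}{}^\rho,\overline{\partial_0 d_{\mu\nu\sigma}{}^\rho})$ required by a second-order system, via formulae \eqref{constr2}--\eqref{constr8}; the trace- and divergence-freeness of $D_{ij}$ is exactly what is needed to make these data consistent with the algebraic and differential symmetries of the rescaled Weyl tensor.

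Second, I feed these initial data into the CWE system \eqref{cwe1}--\eqref{cwe5}. Since this is a quasi-linear system of wave equations on the unknowns, standard well-posedness results (e.g.\ in the smooth category) yield a smooth solution $(g_{\mu\nu},\Theta,s,L_{\mu\nu},d_{\mu\nu\sigma}{}^\rho)$ in some neighborhood of $\mathcal{H}$, which is unique within the chosen gauge. By the equivalence result proved in Appendix~\ref{app_alternative}, this CWE solution solves the MCFE precisely because the constraints \eqref{constr1}--\eqref{constr8} are satisfied by construction on $\mathcal{H}$. The condition $\mathrm{d}\Theta|_{\mathcal{H}}\ne 0$ follows from $\overline{\partial_0\Theta}=\sqrt{\lambda/3}>0$, and continuity then ensures $\Theta$ has no zeros in a sufficiently small one-sided neighborhood of $\mathcal{H}$, so $\mathcal{H}=\scri^-$ as required.

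Third, I address uniqueness up to isometries. Any other smooth solution realizing the same abstract data $[h_{ij},D_{ij}]$ can, by the gauge-realization procedure of Section~\ref{sec_realization}, be brought into the specific gauge \eqref{gauge_conditions_compact} via a conformal rescaling $g\mapsto\phi^2 g$, $\Theta\mapsto\phi\Theta$ and a diffeomorphism adjusting $\overline{g}_{0\mu}$ and $H^\sigma$. After this transformation, the solution coincides on $\mathcal{H}$ with the reduced Cauchy data constructed above, so by the uniqueness part of the wave-equation theory the two solutions agree on a common neighborhood. The independence of the isometry class from the particular representative of $[h_{ij},D_{ij}]$ follows from the conformal covariance \eqref{conformal_covariance1}--\eqref{conformal_covariance5}: a rescaling $h_{ij}\mapsto\Omega^2 h_{ij}$, $D_{ij}\mapsto\Omega^{-1}D_{ij}$ is realized in the bulk by a corresponding rescaling of $(g,\Theta)$, yielding isometric physical spacetimes.

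The main technical obstacle is the last, gauge-theoretic step: one must check that the wave-map gauge $H^\sigma=0$ and the conformal gauge $R=0$ are actually \emph{propagated} off $\mathcal{H}$ by the CWE, so that the CWE solution is genuinely a solution of the MCFE in the prescribed gauge rather than merely a solution of the reduced system. This requires verifying that $H^\sigma$ and the relevant conformal-gauge quantities satisfy homogeneous wave equations and that the constraints themselves propagate; this is precisely the content invoked from Appendix~\ref{app_alternative}, which is why the argument above hinges critically on that equivalence theorem rather than on direct manipulation of \eqref{conf1}--\eqref{conf6}.
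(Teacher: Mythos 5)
Your proposal is correct and follows essentially the same route as the paper: fix the gauge \eq{gauge_conditions_compact}, build the full Cauchy data on $\mathcal{H}$ from $(h_{ij},D_{ij})$ via the constraints \eq{constr2}--\eq{constr8}, solve the CWE by standard hyperbolic theory, and invoke the equivalence of Appendix~\ref{app_alternative} (Theorem~\ref{thm_equivalence}) to pass back to the MCFE, with uniqueness up to isometries and independence of the representative of $[h_{ij},D_{ij}]$ handled by the gauge-realization and conformal-covariance arguments of Section~\ref{sec_gauge_freedom}. You also correctly identify that the entire weight of the argument rests on the constraint/gauge propagation established in the appendix, which is exactly how the paper structures its alternative proof of Friedrich's result.
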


\begin{remark}
{\rm
De Sitter space-time is obtained for $\mathcal{H}=S^3$, $h_{ij}=s_{ij}$ and $D_{ij}=0$, where $s=s_{ij}\mathrm{d}x^i\mathrm{d}x^j$ denotes the round sphere metric,
cf.\ Section~\ref{max_symm}
}
\end{remark}

\section{KID equations}
\label{sec_KIDs}

\subsection{Unphysical Killing equations}

In \cite{ttp2} it is shown that the appropriate substitute for the Killing equation in the unphysical, conformally rescaled space-time is provided by the
\textit{unphysical Killing equations}
\begin{equation}
 \nabla_{(\mu}X_{\nu)} =\frac{1}{4}\nabla^{\sigma} X_{\sigma}\,g_{\mu\nu}
   \quad \& \quad X^{\sigma}\nabla_{\sigma}\Theta =\frac{1}{4}\Theta \nabla_{\sigma} X^{\sigma}
 \;.
\label{2_conditions}
\end{equation}
A vector field $X_{\mathrm{phys}}$ is a Killing field in the physical space-time $(\mcM_{\mathrm{phys}},g_{\mathrm{phys}})$ if and only if
its push-forward
 $X:= \phi_*X_{\mathrm{phys}}$ satisfies \eq{2_conditions} in the unphysical space-time $(\phi(\mcM_{\mathrm{phys}})\subset\mcM,g =\phi(g_{\mathrm{phys}})=\Theta^2  g_{\mathrm{phys}} )$, where $\phi$ defines the conformal rescaling.
The unphysical Killing equations remain regular even where the conformal factor $\Theta$ vanishes.

In what follows
 we shall derive necessary-and-sufficient conditions on a spacelike initial surface which guarantee the existence of a vector field $X$
which satisfies the unphysical Killing equations.

\subsection{KID equations on a Cauchy surface}

Necessary conditions on a vector field $X$ to satisfy the unphysical Killing equations are that
the following wave equations are fulfilled \cite{ttp2},
%
\begin{eqnarray}
\Box_g X_{\mu} + R_{\mu}{}^{\nu}X_{\nu}  + 2\nabla_{\mu}Y &=&0
 \label{wave_X0}
\;,
\\
 \Box_g Y+ \frac{1}{6}X^{\mu}\nabla_{\mu}R+ \frac{1}{3} R Y   &=&0
 \label{wave_Y0}
 \;,
\end{eqnarray}
where we have set
\begin{equation}
 Y:= \frac{1}{4}\nabla_{\sigma}X^{\sigma}
\;.
\end{equation}
It proves fruitful to make the following definitions:
\begin{eqnarray}
 \phi &:=&  X^{\mu}\nabla_{\mu}\Theta - \Theta Y
\;,
\\
\psi &:=& X^{\mu}\nabla_{\mu}s    +sY -\nabla_{\mu}\Theta \nabla^{\mu}Y
\;,
\\
A_{\mu\nu} &:=& 2\nabla_{(\mu}X_{\nu)}  - 2Y g_{\mu\nu}
\;,
\\
B_{\mu\nu} &:=& \mcL_XL_{\mu\nu} +   \nabla_{\mu}\nabla_{\nu}Y
\;.
\end{eqnarray}
All these fields need to vanish whenever $X$ is a solution of \eq{2_conditions} \cite{ttp2}.

The equations \eq{wave_X0} and \eq{wave_Y0} together with the MCFE imply that the following system of wave equations is satisfied by
the fields $\phi$, $\psi$, $A_{\mu\nu}$, $\nabla_{\sigma}A_{\mu\nu}$ and $B_{\mu\nu}$ (cf. \cite{ttp2}):
\begin{eqnarray}
 \Box_gA_{\mu\nu}  &=& 2R_{(\mu}{}^{\kappa}A_{\nu)\kappa} - 2R_{\mu}{}^{\alpha}{}_{\nu}{}^{\beta}A_{\alpha\beta}
  -4B_{\mu\nu}
 \;,
      \label{waveeqn_A}
\\
 \Box_g\phi
 &=& d\psi - \frac{1}{6}R\phi+ A_{\mu\nu} \nabla^{\mu}\nabla^{\nu}\Theta
 \;,
  \label{waveeqn_XnablaTheta}
\\
 \Box_g\psi
  &=& |L|^2 \phi
 + A_{\mu\nu}(\nabla^{\mu}\nabla^{\nu}s - 2\Theta  L_{\kappa}{}^{\mu} L^{\nu\kappa}) + 2\Theta L^{\mu\nu} B_{\mu\nu}
\nonumber
\\
&& + \frac{1}{6}\big(  A_{\mu\nu}\nabla^{\mu}R  \nabla^{\nu}\Theta   - \nabla^{\mu}R \nabla_{\mu}\phi - R\psi   \big)
 \;,
  \label{waveeqn_Xnabla_s}
\\
\Box_g B_{\mu\nu} &\equiv&
 2(g_{\mu\nu}L^{\alpha \beta}-   R_{\mu}{}^{\alpha}{}_{\nu}{}^{\beta})B_{\alpha\beta}
 - 2 R_{(\mu}{}^{\kappa} B_{\nu)\kappa}
+  \frac{2}{3}RB_{\mu\nu}
\nonumber
\\
&&\hspace{-3em}
   + 2L^{\alpha\beta}(\nabla_{\beta}\nabla_{[\alpha}A_{\nu]\mu}  -  \nabla_{\mu}\nabla_{[\alpha}A_{\nu]\beta})
\nonumber
\\
&&\hspace{-3em}
+ (\nabla_{(\mu}A_{|\alpha\beta|}  +  2\nabla_{[\alpha}A_{\beta](\mu})(2\nabla^{\alpha}L_{\nu)}{}^{\beta}
-\frac{1}{12}\delta_{\nu)}{}^{\alpha}\nabla^{\beta}R)
\nonumber
\\
&&\hspace{-3em}
+A^{\alpha\beta} [\nabla_{\alpha}\nabla_{\beta}L_{\mu\nu}
- 2 L_{(\mu}{}^{\kappa} R_{\nu)\alpha\kappa\beta}
+ 2L_{\mu\alpha} R_{\nu\beta}
+ L_{\alpha}{}^{\kappa} (2R_{\mu\beta\nu\kappa}+  R_{\nu\beta\mu\kappa})
\nonumber
\\
&&\hspace{-3em}
-  2g_{\mu\nu} L_{\alpha\kappa}L_{\beta}{}^{\kappa}]
 + |L|^2 A_{\mu\nu}
+  L^{\alpha\beta} R_{\mu\alpha\beta}{}^{\kappa} A_{\nu\kappa}
-  \frac{1}{3}RL_{(\mu}{}^{\kappa}A_{\nu)\kappa}
 \;,
\label{waveeqn_B}
\\
\Box_g\nabla_{\sigma}A_{\mu\nu}  &=&
  2\nabla_{\sigma}(R_{(\mu}{}^{\kappa}A_{\nu)\kappa}
 - R_{\mu}{}^{\alpha}{}_{\nu}{}^{\kappa}A_{\alpha\kappa}  )
   + 2A_{\alpha(\mu}(\nabla_{\nu)}R_{\sigma}{}^{\alpha} - \nabla^{\alpha}R_{\nu)\sigma} )
\nonumber
\\
 &&
  -4R_{\sigma\kappa(\mu}{}^{\alpha}\nabla^{\kappa}A_{\nu)\alpha}
+ R_{\alpha\sigma}\nabla^{\alpha}A_{\mu\nu}
  - 4\nabla_{\sigma}B_{\mu\nu}
      \label{waveeqn_nabla_A}
\;.
\end{eqnarray}
In close analogy to \cite[Theorem 3.4]{ttp2} we immediately obtain the following result:
\begin{theorem}
 \label{KID_eqns_main}
Assume we have been given, in $3+1$ dimensions,  an``unphysical'' space-time $(\mcM, g, \Theta)$, with ($g, \Theta$) a smooth solution of the MCFE \eq{conf1}-\eq{conf6}.
Consider a spacelike hypersurface $\mathcal{H}\subset \mcM$.
 Then there exists a vector field $\hat X$ satisfying the unphysical Killing equations \eq{2_conditions} on $\mathrm{D}^+(\mathcal{H})$
(and thus corresponding to a Killing vector field of the physical space-time)
if and only if there exists a pair $(X,Y)$, $X$ a vector field and $Y$ a function, which fulfills the following equations:
 \begin{enumerate}
  \item[(i)] $\Box_g X_{\mu} + R_{\mu}{}^{\nu}X_{\nu}  + 2\nabla_{\mu}Y= 0$,
  \item[(ii)] $ \Box_g Y+ \frac{1}{6}X^{\mu}\nabla_{\mu}R+ \frac{1}{3} R Y =0$,
   \item[(iii)] $ \ol\phi = 0$ and $ \ol{\partial_0 \phi} = 0$,
  \item[(iv)] $\ol\psi=0$ and $\ol{\partial_0 \psi}=0$,
  \item[(v)] $\ol A_{\mu\nu} =0$,  $\ol{\nabla_0 A_{\mu\nu}} =0$ and $\ol{\nabla_0\nabla_0 A_{\mu\nu}} =0$,
  \item[(vi)] $\ol B_{\mu\nu}=0$ and $\ol{\nabla_0 B_{\mu\nu}}=0$.
 \end{enumerate}
Moreover, $\ol {\hat X}=\ol X$,  $\ol{\nabla_0 \hat X}=\ol{\nabla_0 X}$, $\ol{\nabla_{\mu}\hat X^{\mu}} = \frac{1}{4}\ol Y$ and $\ol{\nabla_0\nabla_{\mu}\hat X^{\mu}} = \frac{1}{4}\ol{\nabla_0\hat Y}$.
\end{theorem}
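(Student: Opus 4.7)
The plan is to follow the scheme used in \cite[Thm.~3.4]{ttp2}, the spacelike setting being simpler than the characteristic one treated there. The direction ``$\Rightarrow$'' is immediate: if $\hat X$ satisfies \eq{2_conditions}, set $X := \hat X$ and $Y := \tfrac14\nabla^\sigma \hat X_\sigma$; then $A_{\mu\nu}$ and $\phi$ vanish by definition, while $\psi$ and $B_{\mu\nu}$ vanish as a consequence of \eq{2_conditions} combined with the MCFE, as shown in \cite{ttp2}. Equations (i), (ii) are the wave equations derived in \cite{ttp2} as necessary consequences of \eq{2_conditions}. Conditions (iii)--(vi) then follow by restriction to $\mathcal{H}$.

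For ``$\Leftarrow$'' my plan is to view \eq{waveeqn_A}--\eq{waveeqn_nabla_A} as a closed homogeneous linear second-order wave system for the collection
\[
U := \bigl(A_{\mu\nu},\; Z_{\sigma\mu\nu},\; \phi,\; \psi,\; B_{\mu\nu}\bigr),\qquad Z_{\sigma\mu\nu} := \nabla_\sigma A_{\mu\nu},
\]
on the MCFE background, where $Z$ is promoted to an independent unknown so that the second derivatives $\nabla\nabla A$ appearing in $\Box_g B_{\mu\nu}$ are reinterpreted as first-order couplings in $Z$. Standard energy estimates for such a system on the globally hyperbolic region $\mathrm{D}^+(\mathcal{H})$ force any solution with vanishing Cauchy data to vanish identically.

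It therefore suffices to check that (iii)--(vi) produce zero Cauchy data for $U$. For $\phi, \psi, B_{\mu\nu}$ this is exactly what is assumed. For $A_{\mu\nu}$ one reads $\overline{A_{\mu\nu}} = 0$ and $\overline{\nabla_0 A_{\mu\nu}} = 0$ from (v). For $Z$, the spatial components $\overline{Z_{i\mu\nu}} = \widetilde\nabla_i\overline{A_{\mu\nu}}$ vanish since $\overline{A_{\mu\nu}} = 0$, while the transverse component reduces to $\overline{Z_{0\mu\nu}} = \overline{\nabla_0 A_{\mu\nu}} = 0$ by (v). The time-derivative $\overline{\nabla_0 Z_{0\mu\nu}} = \overline{\nabla_0\nabla_0 A_{\mu\nu}}$ is zero again by (v), while $\overline{\nabla_0 Z_{i\mu\nu}}$ is obtained by commuting $\nabla_0$ and $\nabla_i$: the curvature correction acts pointwise on $\overline{A_{\mu\nu}} = 0$, and the remainder is the spatial derivative of $\overline{\nabla_0 A_{\mu\nu}} = 0$. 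Hence $U$ vanishes throughout $\mathrm{D}^+(\mathcal{H})$.

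Once $A_{\mu\nu}\equiv 0$, tracing the identity $2\nabla_{(\mu}X_{\nu)} = 2Y g_{\mu\nu}$ yields $Y = \tfrac14\nabla^\sigma X_\sigma$, and hence the first equation of \eq{2_conditions}. The identity $\phi\equiv 0$ then reads $X^\sigma\nabla_\sigma\Theta = \tfrac14\Theta\,\nabla^\sigma X_\sigma$, which is the second. Thus $\hat X := X$ satisfies \eq{2_conditions} on $\mathrm{D}^+(\mathcal{H})$, and the announced initial-surface identifications for $\hat X$ and its divergence are immediate from the construction. The one technical point I expect to be the main obstacle is the bookkeeping verification that the right-hand sides of \eq{waveeqn_A}--\eq{waveeqn_nabla_A}, despite their many curvature-coupling terms, really reassemble into a closed homogeneous linear system in $U$ with smooth background coefficients; this is inherited essentially verbatim from \cite{ttp2}, and once in place the rest reduces to a routine application of uniqueness for linear hyperbolic systems.
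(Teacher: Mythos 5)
Your proposal is correct and follows essentially the same route as the paper, which obtains the theorem "in close analogy to [Theorem 3.4] of \cite{ttp2}" by regarding \eq{waveeqn_A}--\eq{waveeqn_nabla_A} as a closed homogeneous linear wave system for $(\phi,\psi,A_{\mu\nu},\nabla_\sigma A_{\mu\nu},B_{\mu\nu})$, checking that (iii)--(vi) give vanishing Cauchy data, and invoking uniqueness for linear hyperbolic systems on $\mathrm{D}^+(\mathcal{H})$ before reading off \eq{2_conditions} from $A_{\mu\nu}\equiv 0$ and $\phi\equiv 0$. Your treatment of the Cauchy data for $Z_{\sigma\mu\nu}=\nabla_\sigma A_{\mu\nu}$, using the third condition in (v) for the transverse derivative and the commutator argument for $\overline{\nabla_0 Z_{i\mu\nu}}$, is exactly the bookkeeping the paper leaves implicit.
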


\subsection{A special case: $\Theta=1$}

Let us briefly discuss the case where the conformal factor $\Theta$ is identical to one,
\begin{equation*}
 \Theta =1
\;,
\end{equation*}
so that the unphysical space-time can be identified with the physical one.
Then the MCFE imply
\begin{equation*}
 s=\frac{1}{6}\lambda \;, \quad L_{\mu\nu}=sg_{\mu\nu}\;, \quad R_{\mu\nu} = \lambda g_{\mu\nu}
\;,
\end{equation*}
i.e.\ the vacuum Einstein equations hold.
We consider the conditions (i)-(vi) of Theorem~\ref{KID_eqns_main} in this setting.
Condition (iii) is equivalent to $\ol Y=0$ and $\ol{\partial_0 Y}=0$, which provide the initial data  for the wave equation (ii). The only solution is $Y=0$,
i.e.\ $X$ needs to be a Killing field, as desired.
Condition (iv) is then automatically satisfied.
Since
\begin{equation}
   B_{\mu\nu} = \mcL_X L_{\mu\nu}  = s\mcL_X g_{\mu\nu} = 2s\nabla_{(\mu}X_{\nu)}
   \;,
\label{Theta_1_B}
\end{equation}
the validity of (vi) follows from (v), and we are left with the conditions
\begin{eqnarray}
 \Box_g X_{\mu} + \lambda X_{\mu} &=& 0
 \;,
\label{Theta_1_1}
\\
 \ol{\nabla_{(\mu}X_{\nu)}}&=& 0
\label{Theta_1_2}
 \;,
\\
 \ol{\nabla_0\nabla_{(\mu}X_{\nu)}}&=& 0
\label{Theta_1_3}
 \;,
\\
 \ol{\nabla_0\nabla_0\nabla_{(\mu}X_{\nu)}}&=& 0
\label{Theta_1_4}
 \;.
\end{eqnarray}
Note that $\ol B_{\mu\nu}=0$ due to  \eq{Theta_1_B} and \eq{Theta_1_2}, so that \eq{Theta_1_1}-\eq{Theta_1_3} imply via the trace of \eq{waveeqn_A} on $\mathcal{H}$ the validity of \eq{Theta_1_4}.

The equations  \eq{Theta_1_1}-\eq{Theta_1_3} form a possible starting point to derive the KID equations on Cauchy surfaces in space-times satisfying the vacuum Einstein equations (cf.\ \cite{beig,moncrief}).

\subsection{A stronger version of Theorem~\ref{KID_eqns_main}}

Let us now investigate to what extent the conditions (iii)-(vi) in Theorem~\ref{KID_eqns_main} imply each other.
For this purpose we choose adapted coordinates $(x^0\equiv t, x^i)$ in the sense that the initial surface is (locally) given by the set $\{x^0=0\}$
and that, on $\mathcal{H}$, the metric takes the form
\begin{equation}
 g|_{\mathcal{H}} = -(\mathrm{d}t)^2 + \ol g_{ij} \mathrm{d}x^i\mathrm{d}x^j = -(\mathrm{d}t)^2 + h_{ij} \mathrm{d}x^i\mathrm{d}x^j
\;,
\label{form_adapted_coord}
\end{equation}
with $h_{ij}$ some Riemannian metric.
Moreover,  we denote by $f$, $f_i$ and $f_{ij}$ generic functions which depend on the indicated fields (and possibly spatial derivatives thereof) and vanish whenever all their arguments vanish.
The symbol $\breve{.}$ is used to denote the $h$-trace-free part of the corresponding 2-rank tensor on $\mathcal{H}$, i.e.\
\begin{equation}
 \breve v_{ij} := v_{ij} - \frac{1}{3} h_{ij} h^{kl}v_{kl}
\;.
\end{equation}

We start with the identity \cite{ttp2}
\begin{eqnarray}
   \nabla_{\nu}A_{\mu}{}^{\nu}
   -\frac{1}{2}\nabla_{\mu}A_{\nu}{}^{\nu}
\,\equiv \,
 \Box_g  X_{\mu}  + R_{\mu}{}^{\nu}X_{\nu}+ 2\nabla_{\mu} Y
   \;.
   \label{waveX_divergenceA}
\end{eqnarray}
Because of \eq{wave_X0} the right-hand side vanishes and we obtain
\begin{eqnarray}
 \ol{  \nabla_{0}A_{00}} &=& 2 \ol g^{kl}\nabla_k \ol A_{0l}
       -\ol g^{kl}\ol{\nabla_{0}A_{kl} }
\,=\,   -\ol g^{kl}\ol{\nabla_{0}A_{kl} } + f(\ol A_{\mu\nu})
   \;,
\label{useful_eqn_1}
\\
   \ol{\nabla_{0}A_{0i}} &=& \frac{1}{2}\nabla_{i}\ol A_{00}  +  \ol g^{kl}\nabla_{k}\ol A_{il}
      -\frac{1}{2}\ol g^{kl}\nabla_{i}\ol A_{kl}
\,=\, f_i(\ol A_{\mu\nu})
   \;,
\label{useful_eqn_2}
\\
 \ol{ \nabla_0 \nabla_{0}A_{00}} &=& 2 \ol g^{kl}\ol{\nabla_0\nabla_k A_{0l}}
       -\ol g^{kl}\ol{\nabla_0\nabla_{0}A_{kl} }
\nonumber
\\
  &=& 2 \ol g^{kl}\nabla_k\ol{\nabla_0 A_{0l}}
       -\ol g^{kl}\ol{\nabla_0\nabla_{0}A_{kl} }
 + f(\ol A_{\mu\nu})
   \;,
\label{useful_eqn_3}
\\
   \ol{\nabla_0\nabla_{0}A_{0i}} &=& \frac{1}{2}\ol{\nabla_0\nabla_{i}A_{00}}  +  \ol g^{kl}\ol{\nabla_0\nabla_{k}A_{il}}
      -\frac{1}{2}\ol g^{kl}\ol{\nabla_0\nabla_{i}A_{kl} }
\nonumber
\\
 &=& \frac{1}{2}\nabla_{i}\ol{\nabla_0A_{00}}  +  \ol g^{kl}\nabla_{k}\ol{\nabla_0A_{il}}
      -\frac{1}{2}\ol g^{kl}\nabla_{i}\ol{\nabla_0A_{kl}} +f_i(\ol A_{\mu\nu})
   \;.
\label{useful_eqn_4}
\end{eqnarray}
We further  have the identity \cite{ttp2}
\begin{eqnarray*}
 \lefteqn{\nabla_{\nu} B_{\mu}{}^{\nu}- \frac{1}{2}\nabla_{\mu}B_{\nu}{}^{\nu} \equiv
  A_{\alpha\beta}(\nabla^{\alpha}L_{\mu}{}^{\beta} - \frac{1}{2}\nabla_{\mu}L^{\alpha\beta})}
\\
&&+L_{\mu}{}^{\kappa}(\Box_g X_{\kappa} + R_{\kappa}{}^{\alpha}X_{\alpha}+ 2\nabla_{\kappa}Y)
  + \frac{1}{2} \nabla_{\mu}(\Box_g Y+ \frac{1}{6}X^{\nu}\nabla_{\nu}R+ \frac{1}{3} R Y)
   \;.
\end{eqnarray*}
With \eq{wave_X0} and \eq{wave_Y0} we deduce
\begin{eqnarray}
 \ol{  \nabla_{0}B_{00}} &=& 2 \ol g^{kl}\nabla_k \ol B_{0l}
       -\ol g^{kl}\ol{\nabla_{0}B_{kl} } + f(\ol A_{\mu\nu})
\nonumber
\\
 &=&   -\ol g^{kl}\ol{\nabla_{0}B_{kl} } + f(\ol A_{\mu\nu},\ol B_{\mu\nu})
   \;,
\label{useful_eqn_6}
\\
   \ol{\nabla_{0}B_{0i}} &=& \frac{1}{2}\nabla_{i}\ol B_{00}  +  \ol g^{kl}\nabla_{k}\ol B_{il}
      -\frac{1}{2}\ol g^{kl}\nabla_{i}\ol B_{kl} +f_i(\ol A_{\mu\nu})
\nonumber
\\
&=& f_i(\ol A_{\mu\nu},\ol B_{\mu\nu})
   \;.
\label{useful_eqn_7}
\end{eqnarray}
Evaluation of \eq{waveeqn_A} on the initial surface gives with  \eq{christoffel1}-\eq{christoffel2}
\begin{eqnarray}
  \ol {\nabla_0\nabla_0 A_{ij}}    &=&  4\ol B_{ij}-  \ol g^{kl}\ol\Gamma^0_{kl}\ol {\nabla_0A_{ij}} + f_{ij}(\ol A_{\mu\nu})
 \;,
\label{useful_eqn_8}
\\
  \ol {\nabla_0\nabla_0 A_{0i}}    &=&  4 \ol B_{0i}-  \ol g^{kl}\ol\Gamma^0_{kl}\ol{\nabla_0 A_{0i}} + f_{i}(\ol A_{\mu\nu})
 \;,
\label{useful_eqn_9}
\\
 \ol { \nabla_0\nabla_0 A_{00} }   &=&  4 \ol B_{00}-  \ol g^{kl}\ol\Gamma^0_{kl}\ol{\nabla_0 A_{00}} + f(\ol A_{\mu\nu})
\;.
\label{useful_eqn_10}
\end{eqnarray}
%
%
%
%
From the definition of $B_{\mu\nu}$ we obtain with \eq{wave_Y0} (set $B:= g^{\mu\nu}B_{\mu\nu}$)
\begin{eqnarray}
 \ol B &\equiv & \ol L^{\mu\nu}\ol A_{\mu\nu}   + \ol{\Box_g Y}
   +   \frac{1}{6}  \ol X^{\mu}\ol{\nabla_{\mu}R} + \frac{1}{3} \ol{RY}
\nonumber
\\
 &=& \ol L^{\mu\nu}\ol A_{\mu\nu}
\;,
\label{useful_eqn_11}
\\
   \ol {\nabla_0 B } & \equiv & \ol{\nabla_0(L^{\mu\nu}A_{\mu\nu}) }   +  \ol{\nabla_0( \Box_g Y  + \frac{1}{6}X^{\mu}\nabla_{\mu}R + \frac{1}{3} RY) }
\nonumber
\\
 &=&   \ol{\nabla_0(L^{\mu\nu}A_{\mu\nu}) }
\;.
\label{useful_eqn_12}
\end{eqnarray}

We use the equations \eq{useful_eqn_1}-\eq{useful_eqn_12} to establish a stronger version of Theorem~\ref{KID_eqns_main}.
Let us assume that
\begin{equation}
 \ol A_{\mu\nu}=0\;, \quad \ol{\nabla_0A_{ij}}=0\;, \quad \breve{\ol B}_{ij}=0\;, \quad (\ol{\nabla_0B_{ij}})\breve{}=0
\;.
\end{equation}
Then by \eq{useful_eqn_1} and \eq{useful_eqn_2} we have $\ol{\nabla_0 A_{\mu\nu}}=0$.
From \eq{useful_eqn_11} and \eq{useful_eqn_12} we  deduce $\ol B = \ol{\nabla_0 B}=0$.
The equations \eq{useful_eqn_3}, \eq{useful_eqn_8} and \eq{useful_eqn_10} yield the system
\begin{eqnarray*}
 \ol{\nabla_0\nabla_0A_{00}} &=& -\ol g^{ij}\ol{\nabla_0\nabla_0 A_{ij}}
\;,
\\
 \ol g^{ij}\ol{\nabla_0\nabla_0 A_{ij}} &=& 4\ol g^{ij} \ol B_{ij} \overset{\ol B=0}{=} 4\ol B_{00}
 \;,
\\
  \ol{\nabla_0\nabla_0A_{00}} &=& 4\ol B_{00}
 \;,
\end{eqnarray*}
from which we conclude  $ \ol{\nabla_0\nabla_0A_{00}} =   \ol g^{ij}\ol{\nabla_0\nabla_0 A_{ij}} = \ol B_{00}=0$.
From \eq{useful_eqn_4} and the trace-free part of \eq{useful_eqn_8} we then deduce $\ol{\nabla_0\nabla_0 A_{\mu\nu}}=0$,
and the equations \eq{useful_eqn_9} and \eq{useful_eqn_11} imply $\ol B_{\mu\nu}=0$.
Moreover,  invoking \eq{useful_eqn_6} and \eq{useful_eqn_12} yields
\begin{eqnarray*}
& \ol{\nabla_0B_{00}} \,=\, -\ol g^{ij}\ol{\nabla_0 B_{ij}}
\;,&
\\
& 0\,=\, \ol{\nabla_0 B}\,=\, \ol g^{ij}\ol{\nabla_0 B_{ij}} - \ol{\nabla_0 B_{00}}
\;,&
\end{eqnarray*}
i.e.\   $\ol{\nabla_0B_{00}} =\ol g^{ij}\ol{\nabla_0 B_{ij}}=0$.
The equation \eq{useful_eqn_7} then completes the proof that $\ol{\nabla_0 B_{\mu\nu}}=0$.

%
%

We end up with the result
\begin{theorem}
 \label{KID_eqns_main2}
Assume we have been given, in $3+1$ dimensions,  an``unphysical'' space-time $(\mcM, g, \Theta)$, with ($g, \Theta$) a smooth solution of the MCFE \eq{conf1}-\eq{conf6}.
Consider a spacelike hypersurface $\mathcal{H}\subset \mcM$.
 Then there exists a vector field $\hat X$ satisfying the unphysical Killing equations \eq{2_conditions} on $\mathrm{D}^+(\mathcal{H})$
if and only if there exists a pair $(X,Y)$, $X$ a vector field and $Y$ a function, which fulfills the KID equations, i.e.
 \begin{enumerate}
  \item[(a)] equations (i)-(iv) of Theorem~\ref{KID_eqns_main},
  \item[(b)] $\ol A_{\mu\nu} =0$ and  $\ol{\nabla_0 A_{ij}} =0$   with
 $A_{\mu\nu} \equiv 2\nabla_{(\mu}X_{\nu)}  - 2Y g_{\mu\nu}$,
  \item[(c)] $\breve{\ol B}_{ij}=0$ and $(\ol{\nabla_0 B_{ij}})\breve{}=0$ with
$B_{\mu\nu}\equiv \mcL_XL_{\mu\nu} +   \nabla_{\mu}\nabla_{\nu}Y$.
 \end{enumerate}
Moreover, $\ol {\hat X}=\ol X$,  $\ol{\nabla_0 \hat X}=\ol{\nabla_0 X}$, $\ol{\nabla_{\mu}\hat X^{\mu}} = \frac{1}{4}\ol Y$ and $\ol{\nabla_0\nabla_{\mu}\hat X^{\mu}} = \frac{1}{4}\ol{\nabla_0\hat Y}$.
\end{theorem}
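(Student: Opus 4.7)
The plan is to bootstrap from Theorem~\ref{KID_eqns_main}. That result already gives the equivalence between existence of $\hat X$ and the full list of conditions (i)--(vi), so I only need to show that under hypotheses (a), (b), (c) of Theorem~\ref{KID_eqns_main2} the full conditions (v) and (vi) follow. The wave equations (i) and (ii) will be used repeatedly since they are among the hypotheses (as part of (a)) and they are exactly what makes the identities listed in \eq{useful_eqn_1}--\eq{useful_eqn_12} collapse nicely.

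First I would upgrade (b) to the full condition (v) part concerning $\ol{\nabla_0 A_{\mu\nu}}$. From the identity $\nabla_\nu A_\mu{}^\nu - \tfrac{1}{2}\nabla_\mu A_\nu{}^\nu \equiv \Box_g X_\mu + R_\mu{}^\nu X_\nu + 2\nabla_\mu Y$, which vanishes by (i), the spatial traces \eq{useful_eqn_1}--\eq{useful_eqn_2} express $\ol{\nabla_0 A_{00}}$ and $\ol{\nabla_0 A_{0i}}$ as linear combinations of $\nabla$-spatial derivatives of $\ol A_{\mu\nu}$ (which vanish by hypothesis) and of $\ol g^{kl}\ol{\nabla_0 A_{kl}}$ (which vanishes by (b)). Hence $\ol{\nabla_0 A_{\mu\nu}}=0$ in full. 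The analogous identity for $B_{\mu\nu}$, combined with (i) and (ii), gives \eq{useful_eqn_6}--\eq{useful_eqn_7}, which will later let me pass from the spatial components of $\ol{\nabla_0 B}$ to the $0\mu$-components.

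Next I would recover the trace information. From the definition of $B_{\mu\nu}$ and (ii) one has \eq{useful_eqn_11}--\eq{useful_eqn_12}, so $\ol B = \ol L^{\mu\nu}\ol A_{\mu\nu} = 0$ and $\ol{\nabla_0 B}=0$. Combined with (c), which controls only the trace-free parts of $\ol B_{ij}$ and $\ol{\nabla_0 B_{ij}}$, these identities would give the missing traces provided I can handle the $00$-components. For this I exploit the wave equation \eq{waveeqn_A} evaluated on $\mathcal H$: equations \eq{useful_eqn_8}--\eq{useful_eqn_10} show that $\ol{\nabla_0\nabla_0 A_{\mu\nu}}$ is algebraically $4\ol B_{\mu\nu}$ up to terms that are already known to vanish. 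Coupling \eq{useful_eqn_3} with \eq{useful_eqn_8} and \eq{useful_eqn_10} yields a small linear system in $\ol B_{00}$, $\ol g^{ij}\ol{\nabla_0\nabla_0 A_{ij}}$ and $\ol{\nabla_0\nabla_0 A_{00}}$ which, together with $\ol B=0$, forces all three to vanish. The trace-free part of \eq{useful_eqn_8} combined with \eq{useful_eqn_4} then gives $\ol{\nabla_0\nabla_0 A_{\mu\nu}}=0$ in full, and \eq{useful_eqn_9} together with $\ol B_{00}=0$, $\breve{\ol B}_{ij}=0$ delivers $\ol B_{\mu\nu}=0$.

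Finally the same pattern, one derivative higher, closes the argument: the trace-free part of $\ol{\nabla_0 B_{ij}}$ vanishes by hypothesis (c), the trace is pinned down by \eq{useful_eqn_12} via $\ol{\nabla_0 B}=0$, and \eq{useful_eqn_6}--\eq{useful_eqn_7} then yield $\ol{\nabla_0 B_{\mu\nu}}=0$. All conditions (iii)--(vi) of Theorem~\ref{KID_eqns_main} are thus met. The only real obstacle I anticipate is verifying that the $00$-block of the linear system at the second-derivative level is nondegenerate; this is not automatic from the $A,B$ algebra alone but is secured precisely because $\ol g^{ij}\ol{\nabla_0\nabla_0 A_{ij}}$ and $\ol{\nabla_0\nabla_0 A_{00}}$ enter \eq{useful_eqn_3}, \eq{useful_eqn_8} and \eq{useful_eqn_10} with independent coefficients in the adapted gauge \eq{form_adapted_coord}, so the bookkeeping goes through.
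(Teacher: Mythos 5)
Your proposal is correct and follows essentially the same route as the paper: it bootstraps from Theorem~\ref{KID_eqns_main} by using the divergence identities \eq{useful_eqn_1}--\eq{useful_eqn_7}, the restricted wave equation \eq{useful_eqn_8}--\eq{useful_eqn_10}, and the trace relations \eq{useful_eqn_11}--\eq{useful_eqn_12} to upgrade (b) and (c) to the full conditions (v) and (vi), including the same small linear system in $\ol B_{00}$, $\ol g^{ij}\ol{\nabla_0\nabla_0 A_{ij}}$ and $\ol{\nabla_0\nabla_0 A_{00}}$ whose nondegeneracy you rightly flag and which indeed closes ($a=-b$, $a=b=4c$ forces $a=b=c=0$). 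This is precisely the argument given in the paper.
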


\subsection{The (proper) KID equations}

We want to replace the equations $\ol{\partial_0\psi}=0$ and $(\ol{\nabla_0 B_{ij}})\breve{}=0$ appearing in
Theorem~\ref{KID_eqns_main2}
 by intrinsic equations on $\mathcal{H}$ in the sense that they involve at most first-order transverse derivatives of $X$ and $Y$, which belong to the freely prescribable initial data for the wave equations \eq{wave_X0} and \eq{wave_Y0}.
The higher-order derivatives appearing can be eliminated via
\eq{wave_Y0} which implies
\begin{eqnarray}
  \ol{\nabla_0\nabla_0 Y }&=&  \ol g^{kl}\ol{\nabla_k\nabla_l Y}+ \frac{1}{6}\ol X^{\mu}\ol{\nabla_{\mu}R}+ \frac{1}{3} \ol R \ol Y
\;.
\label{tt-Y}
\end{eqnarray}
%
We are straightforwardly led to

\begin{theorem}
\label{KID_eqns_main_scri2}
Assume that we have been given a $3+1$-dimensional space-time $(\mcM, g, \Theta)$, with ($g, \Theta$) being a smooth solution of the MCFE.
Let $\mathring X$ and $\mathring\Lambda$ be spacetime vector fields,
and $\mathring Y$ and $\mathring\Upsilon$
be functions defined along a spacelike hypersurface $\mathcal{H}\subset \mcM$. 
Then there exists a smooth space-time vector field $X$ with $\ol X=\mathring X$,  $\ol{\nabla_0 X}=\mathring \Lambda$, $\ol{\nabla_{\mu}X^{\mu}} = \frac{1}{4}\mathring Y$ and $\ol{\nabla_0\nabla_{\mu}X^{\mu}} = \frac{1}{4}\mathring \Upsilon$  satisfying the unphysical Killing equations   \eq{2_conditions} on $\mathrm{D}^+(\mathcal{H})$
(and thus corresponding to a Killing field of the physical space-time)
if and only if in the adapted coordinates \eq{form_adapted_coord}:
 \begin{enumerate}
   \item[(i)] $ \ol\phi\equiv   \mathring X^{\mu}\ol{\nabla_{\mu}\Theta} - \ol \Theta \mathring Y=0$,
\\
 $\ol{\partial_0\phi} \equiv   \mathring \Lambda^{\mu}\ol{\nabla_{\mu}\Theta}+  \mathring X^{\mu}\ol{\nabla_{\mu}\nabla_0\Theta}
  - \ol \Theta \mathring \Upsilon   - \ol {\nabla_0\Theta} \mathring Y =0$,
  \item[(ii)] $\ol \psi \equiv \mathring X^{\mu}\ol{\nabla_{\mu}s}    +\ol s\mathring Y - \ol{\nabla^{i}\Theta} \tilde \nabla_{i}\mathring Y
  + \ol{\nabla_{0}\Theta} \mathring \Upsilon =0 $,
\\
   $\ol{\partial_0 \psi}^{\mathrm{intr}} := \mathring \Lambda^{\mu}\ol{\nabla_{\mu}s}  + \mathring  X^{\mu}\ol{\nabla_{\mu}\nabla_0s }
    +\ol{\nabla_0s}\mathring Y +(\ol s +\ol{\nabla_{0}\nabla_0\Theta}  )\mathring \Upsilon
  -\ol{\nabla^{i}\nabla_0\Theta}\tilde \nabla_{i}\mathring Y
 +\ol{\nabla_{0}\Theta }(\Delta_h \mathring Y - \ol \Gamma^{k}_{0k}\mathring \Upsilon
+ \frac{1}{6}\mathring X^{\mu}\ol{\nabla_{\mu}R}+ \frac{1}{3} \ol R \mathring Y )
 - \ol{\nabla^{k}\Theta }(\tilde \nabla_{k}\mathring \Upsilon - \ol \Gamma^{i}_{0k}\tilde\nabla_{i}\mathring Y ) =0 $,
  \item[(iii)] $\ol A_{ij} \equiv  2\nabla_{(i}\mathring X_{j)} - 2\mathring Y \ol g_{ij}=0$,
\\
$\ol A_{0i} \equiv  \mathring \Lambda_{i} + \nabla_{i} \mathring X_{0} =0$,
\\
$\ol A_{00} \equiv 2 \mathring \Lambda_{0} + 2\mathring Y =0$,
\\
$\ol {\nabla_0 A_{ij}} \equiv 2\tilde\nabla_{(i}\mathring \Lambda_{j)}
-2\ol\Gamma^{k}_{0(i}\nabla_{k}\mathring X_{j)}
-2\ol\Gamma^{0}_{ij}\mathring \Lambda_{0}
+ 2\ol R_{0(ij)}{}^{\mu}\mathring X_{\mu}- 2\mathring \Upsilon \ol g_{ij}=0$,
  \item[(iv)]  $\breve{\ol B}_{ij}\equiv (\mathring X^{\mu}\ol{\nabla_{\mu}L_{ij}} +2 \ol L_{\mu(i}\nabla_{j)}\mathring X^{\mu} +  \tilde\nabla_i\tilde\nabla_j\mathring Y-\ol \Gamma^{0}_{ij}\mathring \Upsilon  )\breve{}=0$,
\\
 $(\ol{\nabla_0 B^{\mathrm{intr}}_{ij}})\breve{} := [  \ol{\mcL_{\mathring X}\nabla_0L_{ij}}
+ 2\ol L_{\mu(i}(\partial_{j)}\mathring \Lambda^{\mu} + \ol \Gamma^{\mu}_{j\alpha}\mathring\Lambda^{\alpha}
-\ol \Gamma^k_{0j}\nabla_k\mathring X^{\mu})
+ 2\ol L_{k(i}\ol R_{j)\mu 0}{}^k \mathring X^{\mu}
 +  \tilde \nabla_{i}\tilde\nabla_{j} \mathring \Upsilon
- \ol\Gamma^0_{ij}(\Delta_h \mathring Y + \frac{1}{6}\mathring X^{\mu}\ol{\nabla_{\mu}R}+ \frac{1}{3} \ol R \mathring Y )
- 2\ol\Gamma^{k}_{0(i}\tilde\nabla_{j)}\tilde \nabla_{k}\mathring Y
 + ( \ol R_{0ij}{}^{0} + \ol\Gamma^{k}_{0i}\ol\Gamma^0_{jk} + \ol\Gamma^0_{ij}\ol \Gamma^k_{0k})\mathring \Upsilon
+ (\ol R_{0ij}{}^{k}\ -\tilde\nabla_i \ol \Gamma^k_{0j})\tilde\nabla_k\mathring Y  ]\breve{}=0$.
 \end{enumerate}
\end{theorem}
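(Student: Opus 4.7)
The plan is to deduce Theorem~\ref{KID_eqns_main_scri2} from Theorem~\ref{KID_eqns_main2} by translating its conditions (a)--(c) into equations on $\mathcal{H}$ that only involve the prescribed initial data $(\mathring X,\mathring\Lambda,\mathring Y,\mathring\Upsilon)$, together with the metric, the background MCFE-fields on $\mathcal{H}$ and their transverse derivatives, and spatial derivatives of the data. Concretely, given $(\mathring X,\mathring\Lambda,\mathring Y,\mathring\Upsilon)$, one solves the coupled wave system (i)--(ii) of Theorem~\ref{KID_eqns_main} with initial data $\ol X=\mathring X$, $\ol{\nabla_0 X}=\mathring\Lambda$, $\ol Y=\mathring Y$, $\ol{\partial_0 Y}=\mathring\Upsilon$, and then asks when the eight intrinsic quantities $\ol\phi$, $\ol{\partial_0\phi}$, $\ol\psi$, $\ol{\partial_0\psi}$, $\ol A_{\mu\nu}$, $\ol{\nabla_0 A_{ij}}$, $\breve{\ol B}_{ij}$, $(\ol{\nabla_0 B_{ij}})\breve{}$ vanish.

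For condition (i), the first formula in (ii), condition (iii), and the first formula in (iv), one simply expands $\phi$, $\psi$, $A_{\mu\nu}$, $B_{ij}$ via their definitions in terms of $X$ and $Y$, evaluates on $\mathcal{H}$ in the adapted coordinates \eq{form_adapted_coord}, and substitutes the identifications of the initial data. No second-order transverse derivatives of $X$ or $Y$ occur, so the resulting expressions are already intrinsic. Note that here, in contrast to the gauge of Section~\ref{subsec_gauge}, the mixed Christoffels $\ol\Gamma^0_{ij}$, $\ol\Gamma^k_{0i}$, $\ol\Gamma^0_{00}$, $\ol\Gamma^k_{00}$ need not vanish, which explains why such terms appear explicitly in (iii) and (iv).

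For the ``derivative conditions'' $\ol{\partial_0\psi}=0$ and $(\ol{\nabla_0 B_{ij}})\breve{}=0$, second-order transverse derivatives of $Y$ do occur. They are eliminated via the wave equation~\eq{tt-Y},
\begin{equation*}
\ol{\nabla_0\nabla_0 Y}=\ol g^{kl}\ol{\nabla_k\nabla_l Y}+\frac{1}{6}\ol X^\mu \ol{\nabla_\mu R}+\frac{1}{3}\ol R\,\ol Y,
\end{equation*}
which replaces $\ol{\partial_0\partial_0 Y}$ by $\Delta_h\mathring Y$ plus algebraic terms in the data; this is precisely what generates the Laplacian and curvature contributions displayed in the ``intr'' versions of (ii) and (iv). Similarly, where second-order transverse derivatives of $X$ would appear (e.g.\ via $\mathcal{L}_X\nabla_0 L_{ij}$ after commutation), they are eliminated by \eq{wave_X0}.

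The main obstacle is the bookkeeping for $(\ol{\nabla_0 B_{ij}})\breve{}$. Starting from $B_{ij}\equiv X^\mu\nabla_\mu L_{ij}+2L_{\mu(i}\nabla_{j)}X^\mu+\nabla_i\nabla_j Y$, one applies $\nabla_0$, commutes $\nabla_0$ past the spatial covariant derivatives (this is where the curvature contractions $\ol R_{0ij}{}^k$, $\ol R_{0ij}{}^0$ with $\mathring X$ and $\tilde\nabla_k\mathring Y$ in (iv) originate), rewrites $\ol{\nabla_0 L_{ij}}$ and $\ol{\nabla_0\nabla_i\nabla_j Y}$ in terms of coordinate derivatives modulo Christoffels of the form $\ol\Gamma^0_{ij}$, $\ol\Gamma^k_{0i}$, and finally substitutes \eq{tt-Y} for the single occurrence of $\ol{\nabla_0\nabla_0 Y}$. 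Extracting the $h$-trace-free part yields the formula in~(iv). The trace parts, as well as the components $\ol A_{00}$, $\ol{\nabla_0 A_{00}}$, $\ol{\nabla_0 A_{0i}}$, $\ol{\nabla_0\nabla_0 A_{\mu\nu}}$, $\ol B_{00}$, $\ol B_{0i}$, $\ol{\nabla_0 B_{00}}$, $\ol{\nabla_0 B_{0i}}$ needed for the full conditions of Theorem~\ref{KID_eqns_main2}, are automatically controlled via the identities \eq{useful_eqn_1}--\eq{useful_eqn_12} once (i)--(iv) above are assumed, exactly as in the proof of Theorem~\ref{KID_eqns_main2}. Collecting these calculations furnishes the claimed equivalence and, as a by-product, the announced identifications of $\ol X$, $\ol{\nabla_0 X}$, $\tfrac{1}{4}\ol{\nabla_\mu X^\mu}$ and $\tfrac{1}{4}\ol{\nabla_0\nabla_\mu X^\mu}$ with the prescribed data.
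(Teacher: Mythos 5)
Your proposal is correct and follows essentially the same route as the paper: solve the wave system \eq{wave_X0}--\eq{wave_Y0} with the prescribed data, then use \eq{tt-Y} (and the identities already established for Theorem~\ref{KID_eqns_main2}) to rewrite the conditions of Theorem~\ref{KID_eqns_main2} intrinsically, which yields the stated equivalence. The paper's own proof is just a terser version of this same reduction.
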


\begin{proof}
Assume that there exist fields $\mathring X$, $\mathring \Lambda$, $\mathring Y$ and $\mathring \Upsilon$ which satisfy (i)-(iv).
These fields provide the initial data for the wave equations \eq{wave_X0} and \eq{wave_Y0} for $X$ and $Y$.
A solution exists due to standard results. Once \eq{wave_X0} and \eq{wave_Y0} are satisfied the considerations above reveal that
(i)-(iv) are equivalent to (a)-(c) of Theorem~\ref{KID_eqns_main2}, i.e.\ all the hypotheses of Theorem~\ref{KID_eqns_main2} hold
and we are done.
From the derivation of (i)-(iv) it follows that these conditions are necessary, as well.
\qed
\end{proof}

\begin{remark}
{\rm
 We call the equations in (i)-(iv) the \textit{(proper) KID equations on~$\mathcal{H}$}.
}
\end{remark}

\section{KID equations on a spacelike $\scri^-$}
\label{sec_scri}

\subsection{Derivation of the (reduced) KID equations}

Let us restrict now attention to space-times which contain a spacelike $\scri^-$, which we take henceforth  as initial surface (recall that this requires a positive
cosmological constant  $\lambda$).
We impose the $(R=0, \ol s=0,  \ol g_{00}=-1, \ol g_{0i}=0, \hat g_{\mu\nu} = \ol g_{\mu\nu})$-wave-map gauge condition introduced in Section~\ref{subsec_gauge}.
Recall that the freely prescribable  data on $\scri^-$ for the Cauchy problem are the conformal class of a Riemannian metric $h_{ij}$ and a
symmetric, trace- and divergence-free tensor $D_{ij}$. The MCFE then imply the  constraint equations \eq{constr2}-\eq{constr8} on $\scri^-$.
%
In Appendix~\ref{app_alternative} it is shown that a solution to the MCFE 
further satisfies 
\begin{eqnarray}
 \ol{\nabla_0\nabla_0\Theta}=0\;, \quad \ol R_{0ij}{}^k=0
\;.
\end{eqnarray}
We are now ready to evaluate the conditions (i)-(iv) of Theorem~\ref{KID_eqns_main_scri2}.

The condition (i) becomes
\begin{eqnarray}
 \mathring X^0 =0\;, \quad \mathring \Lambda^0 = \mathring Y
\;.
\end{eqnarray}
Then condition (ii) is satisfied iff (set $\Delta_{\tilde g}:=\ol g^{ij}\tilde\nabla_i\tilde\nabla_j$)
\begin{eqnarray}
 \mathring \Upsilon =0\;, \quad \mathring X^i \tilde\nabla_i  \tilde R + 2 \tilde R\mathring Y
 + 4\Delta_{\tilde g} \mathring Y=0
\;.
\label{second_cond}
\end{eqnarray}
The condition $\ol A_{\mu\nu}=0$ requires
\begin{eqnarray}
 \mathring \Lambda^i  &=& 0\;,
\\
 \mathring Y &=& \frac{1}{3}\tilde\nabla_i\mathring X^i \;,
\label{Y_div}
\\
 ( \tilde\nabla_{(i}\mathring X_{j)})\breve{} &=& 0
\label{conf_Killing}
\;.
\end{eqnarray}
The condition $\ol{\nabla_0A_{ij}}=0$ is then automatically fulfilled.

We reconsider the second condition in \eq{second_cond}.
Observe that \eq{Y_div}, \eq{conf_Killing} and the second Bianchi identity imply the relation
\begin{eqnarray*}
0 \,=\, \tilde\nabla^i\tilde\nabla^j\ol A_{ij} &=&  \tilde\nabla_i\Delta_{\tilde g}\mathring X^i + \Delta_{\tilde g} \mathring Y   + \frac{1}{2}\mathring X^i \tilde\nabla_i\tilde R
 + \underbrace{\tilde R_{jk}\tilde\nabla^j\mathring X^k}_{=\tilde R\mathring Y}
\\
&=& 4 \Delta_{\tilde g} \mathring Y   + \mathring X^i \tilde\nabla_i\tilde R
 + 2\tilde R\mathring Y
\;,
\end{eqnarray*}
i.e.\ \eq{second_cond} follows from  \eq{Y_div} and \eq{conf_Killing}.

We have
\begin{eqnarray*}
   \breve{\ol B}_{ij} &=& (\mathring X^{k}\tilde \nabla_{k}\tilde L_{ij}+2 \tilde L_{k(i}\tilde \nabla_{j)}\mathring X^{k} +   \tilde\nabla_{i}\tilde\nabla_{j}\mathring Y)\breve{}
\\
 &=& \mcL_{\mathring X^k\partial_k}\breve{\tilde L}_{ij}+   (\tilde\nabla_{i}\tilde\nabla_{j}\mathring Y)\breve{}
\:,
\end{eqnarray*}
and
\begin{eqnarray*}
 (\ol{\nabla_0 B^{\mathrm{intr}}_{ij}})\breve{} &=&  -\sqrt{\frac{\lambda}{3}}(D_{ij} \mathring Y
 +  \mathring X^{k}\tilde\nabla_{k}D_{ij}
 +2D_{k(i}\tilde\nabla_{j)}\mathring X^{k})
\\
  &=&  -\sqrt{\frac{\lambda}{3}}(\mcL_{\mathring X^k\partial_k}D_{ij} +D_{ij} \mathring Y)
\;.
\end{eqnarray*}
We observe that due to the second Bianchi identity and \eq{Y_div}
\begin{eqnarray*}
 \tilde\nabla_i\tilde\nabla^k \ol A_{jk} &=&  \mcL_{\mathring X^k\partial_k}\tilde R_{\mu\nu} + \tilde\nabla_i \tilde\nabla_j\mathring Y  + 2\mathring X^k\tilde\nabla_{[i}\tilde R_{j]k}
+ \Delta_{\tilde g}\tilde\nabla_i\mathring X_j
\\
&& + 2\tilde R_{i}{}^{k}{}_{j}{}^l  \tilde\nabla_k\mathring X_l  - 2 \tilde R_{ij}\mathring Y - \tilde R_{i}{}^{k}\ol A_{jk}
\;.
\end{eqnarray*}
Symmetrizing this expression, taking its traceless part and taking $\ol A_{ij}=0$ into account we end up with
\begin{equation*}
    \mcL_{\mathring X^k\partial_k}\breve{\tilde L}_{\mu\nu} + (\tilde\nabla_i \tilde\nabla_j\mathring Y  )\breve{}
  =0
\;,
\end{equation*}
i.e.\ $\breve{\ol B}_{ij}$ holds automatically, as well.

\begin{theorem}
\label{KID_eqns_main_hyp_infinity}
Assume we have been given a $3+1$-dimensional ``unphysical'' space-time $(\mcM, g, \Theta)$, with  $(g_{\mu\nu}, \Theta, s, L_{\mu\nu}, d_{\mu\nu\sigma}{}^{\rho})$ a smooth solution of the MCFE with $\lambda>0$  in the $(R=0, \ol s=0,  \ol g_{00}=-1, \ol g_{0i}=0, \hat g_{\mu\nu} = \ol g_{\mu\nu})$-wave-map gauge.
Then there exists a smooth vector field $X$
satisfying the unphysical Killing equations \eq{2_conditions} on $\mathrm{D}^+(\scri^-)$
(and thus corresponding to a Killing vector field of the physical space-time)
if and only if there exists a conformal Killing vector field $\mathring X$ on $(\scri^-, \tilde g=h_{ij}\mathrm{d}x^i\mathrm{d}x^j)$ such that the reduced KID equations
\begin{eqnarray}
   \mcL_{\mathring X}D_{ij} + \frac{1}{3}D_{ij}\tilde\nabla_k\mathring X^k &=&0
\label{reduced_KID_2}
\end{eqnarray}
hold (recall that the symmetric, trace- and divergence-free tensor  field $D_{ij}= \ol d_{0i0j}$ belongs to the freely prescribable initial data).
In that case $X$ satisfies
\begin{equation}
 \ol X^0=0\;, \quad \ol X^i = \mathring X^i\;, \quad  \ol {\nabla_0 X^0} = \frac{1}{3}\tilde\nabla_i\mathring X^i\;, \quad \ol{\nabla_0 X^i}=0\;.
\end{equation}
\end{theorem}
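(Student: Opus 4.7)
The plan is to simply apply Theorem~\ref{KID_eqns_main_scri2} with $\mathcal{H}=\scri^-$ in the $(R=0,\ol s=0,\ol g_{00}=-1,\ol g_{0i}=0,\hat g_{\mu\nu}=\ol g_{\mu\nu})$-wave-map gauge, and to feed the constraint equations \eq{constr2}-\eq{constr8} together with the extra identities $\ol{\nabla_0\nabla_0\Theta}=0$ and $\ol R_{0ij}{}^k=0$ from Appendix~\ref{app_alternative} into the four conditions (i)-(iv) of that theorem. The goal is to show that these conditions collapse to (a) the algebraic relations identifying $\mathring\Lambda^\mu,\mathring Y,\mathring\Upsilon$ in terms of $\mathring X^i$, (b) the conformal Killing equation for $\mathring X^i$ on $(\scri^-,\tilde g)$, and (c) the single reduced equation \eq{reduced_KID_2} for $D_{ij}$.

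First I would dispose of conditions (i) and (iii). Using $\ol\Theta=0$, $\ol{\nabla_0\Theta}=\sqrt{\lambda/3}$ and $\ol{\nabla_0\nabla_0\Theta}=0$, the equation $\ol\phi=0$ forces $\mathring X^0=0$ and $\ol{\partial_0\phi}=0$ forces $\mathring\Lambda^0=\mathring Y$. The vanishing of the Christoffel symbols \eq{christoffel} then turns $\ol A_{0i}=0$ and $\ol A_{00}=0$ into $\mathring\Lambda^i=0$ and $\mathring\Lambda^0=-\mathring Y$ (compatible with the previous identification only when $\mathring Y$ already equals the divergence), while the trace and the trace-free part of $\ol A_{ij}=0$ yield $\mathring Y=\tfrac{1}{3}\tilde\nabla_i\mathring X^i$ and $(\tilde\nabla_{(i}\mathring X_{j)})\breve{}=0$. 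The equation $\ol{\nabla_0 A_{ij}}=0$ is then automatic, since every term containing $\ol\Gamma^0_{ij}$, $\ol\Gamma^k_{0i}$, $\mathring\Lambda^i$ or $\mathring\Upsilon$ vanishes (the last after (ii)).

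Next I would show that condition (ii) reduces to $\mathring\Upsilon=0$, the remaining scalar equation $\mathring X^i\tilde\nabla_i\tilde R+2\tilde R\mathring Y+4\Delta_{\tilde g}\mathring Y=0$ being a consequence of (iii): applying $\tilde\nabla^i\tilde\nabla^j$ to $\ol A_{ij}=0$ and using the contracted second Bianchi identity together with $\tilde R_{jk}\tilde\nabla^j\mathring X^k=\tilde R\mathring Y$ produces exactly that scalar identity. This is the redundancy already worked out in the text.

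The heart of the argument is condition (iv). Substituting $\ol L_{ij}=\tilde L_{ij}$ and $\mathring\Upsilon=0$ into $\breve{\ol B}_{ij}$ gives $\mcL_{\mathring X^k\partial_k}\breve{\tilde L}_{ij}+(\tilde\nabla_i\tilde\nabla_j\mathring Y)\breve{}$; applying $\tilde\nabla_i\tilde\nabla^k$ to $\ol A_{jk}=0$, symmetrizing in $(ij)$ and taking the trace-free part shows this expression vanishes automatically for any conformal Killing field — the step I expect to be the main bookkeeping obstacle, since one must carefully commute covariant derivatives, absorb the Riemann tensor terms and use \eq{Y_div}. Finally, substituting $\ol{\partial_0 L_{ij}}=-\sqrt{\lambda/3}\,D_{ij}$, $\mathring\Lambda^i=0$, $\mathring\Upsilon=0$ and the vanishing Christoffel symbols into $(\ol{\nabla_0 B^{\mathrm{intr}}_{ij}})\breve{}$ leaves $-\sqrt{\lambda/3}(\mcL_{\mathring X}D_{ij}+D_{ij}\mathring Y)$, which, after replacing $\mathring Y$ by $\tfrac{1}{3}\tilde\nabla_k\mathring X^k$, is exactly the reduced KID equation \eq{reduced_KID_2}. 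Conversely, given a conformal Killing $\mathring X$ satisfying \eq{reduced_KID_2}, the identifications above define initial data $(\mathring X^\mu,\mathring\Lambda^\mu,\mathring Y,\mathring\Upsilon)$ for which (i)-(iv) of Theorem~\ref{KID_eqns_main_scri2} hold, and that theorem yields the desired $X$; the final values $\ol X^0=0$, $\ol X^i=\mathring X^i$, $\ol{\nabla_0 X^0}=\tfrac{1}{3}\tilde\nabla_i\mathring X^i$, $\ol{\nabla_0 X^i}=0$ are then a direct readout.
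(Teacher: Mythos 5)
Your proposal follows essentially the same route as the paper: specialize Theorem~\ref{KID_eqns_main_scri2} to $\scri^-$, insert the constraints \eq{constr2}-\eq{constr8} and the identities $\ol{\nabla_0\nabla_0\Theta}=0$, $\ol R_{0ij}{}^k=0$, show that (i)--(iii) reduce to $\mathring X^0=0$, $\mathring\Lambda^\mu=\mathring Y\delta^\mu_0$, $\mathring\Upsilon=0$, the conformal Killing equation and \eq{Y_div} (with the scalar equation in (ii) and the condition $\breve{\ol B}_{ij}=0$ following automatically from $\tilde\nabla^i\tilde\nabla^j\ol A_{ij}=0$ and $\tilde\nabla_i\tilde\nabla^k\ol A_{jk}$ plus the Bianchi identity), leaving only $(\ol{\nabla_0 B^{\mathrm{intr}}_{ij}})\breve{}=-\sqrt{\lambda/3}(\mcL_{\mathring X}D_{ij}+D_{ij}\mathring Y)=0$ as the reduced KID equation. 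One small correction: since $\ol g_{00}=-1$, the condition $\ol A_{00}=2\mathring\Lambda_0+2\mathring Y=0$ gives $\mathring\Lambda^0=+\mathring Y$ after raising the index, in exact agreement with (i) — there is no residual compatibility condition as your parenthetical suggests.
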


\begin{remark}
\em{
Note that, in contrast to the $\lambda=0$-case treated in \cite{ttp2}, the candidate fields,
i.e.\ the conformal Killing fields on $\scri^-$, do depend here on the initial data $h=h_{ij}\mathrm{d}x^i\mathrm{d}x^j$.
}
\end{remark}

\begin{remark}
\em{ 
For initial data with $D_{ij}=0$ the reduced Killing equations \eq{reduced_KID_2} are always satisfied, and each candidate field, i.e.\ each conformal
Killing field on the initial manifold, extends to a Killing field of the physical space-time.
}
\end{remark}

In terms of an initial value problem Theorem~\ref{thm_well-posedness} and \ref{KID_eqns_main_hyp_infinity} state that given a Riemannian manifold $(\mathcal{H},h)$ and
a symmetric, trace- and divergence-free tensor field $D_{ij}$ there exists an (up to isometries) unique evolution into a space-time manifold $(\mcM,g,\Theta)$ with $\mathcal{H}=\scri^-$, $\ol g_{ij}=h_{ij}$ and $\ol d_{0i0j}=D_{ij}$ which fulfills the MCFE and contains a vector field satisfying the unphysical Killing equations \eq{2_conditions} if and only if there exists a
conformal Killing vector field $\mathring X$ on $(\mathcal{H},h)$ such that the reduced KID equations \eq{reduced_KID_2} hold.

\subsection{Properties of the reduced KID equations}

We compute how the reduced KID equations \eq{reduced_KID_2} behave under conformal transformations.
For this consider the conformally rescaled metric $\tilde{\tilde g}:=\Omega^2 \tilde g$ with $\Omega$ some positive function.
Expressed in terms of $\tilde{\tilde g}$
\eq{reduced_KID_2} becomes
%
\begin{eqnarray}
  \mcL_{\mathring X}(\Omega^{-1} D_{ij}) +  \frac{1}{3}(\Omega^{-1}  D_{ij})\tilde {\tilde\nabla}_k\mathring X^k &=&0
\;,
\label{reduced_KID_3}
\end{eqnarray}
i.e.\ they are conformally covariant in the following sense:
\begin{lemma}
\label{conf_KIDs}
The pair $(\tilde g_{ij}, D_{ij})$ is a solution of  the reduced KID equations \eq{reduced_KID_2} if and only if
the conformally rescaled pair  $(\Omega^2\tilde g_{ij}, \Omega^{-1}D_{ij})$, with $\Omega$ some positive function,
is a solution of these equations.
\end{lemma}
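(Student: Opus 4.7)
The plan is to verify the claim by a direct computation using two elementary ingredients: (a) the Lie derivative $\mcL_{\mathring X}$ acting on a tensor does not depend on any connection, so it is insensitive to the conformal rescaling; (b) the behaviour of the divergence $\tilde\nabla_k \mathring X^k$ under $\tilde g\mapsto \tilde{\tilde g}=\Omega^2\tilde g$ is explicitly controllable via the volume element. Since the conformal factor $\Omega$ is an arbitrary positive function, one direction of the iff automatically implies the other (replace $\Omega$ by $\Omega^{-1}$), so it suffices to establish one implication.

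First I would record the conformal behaviour of the divergence. In dimension $n=3$ one has $\sqrt{\det \tilde{\tilde g}}=\Omega^{3}\sqrt{\det \tilde g}$, whence
\begin{equation*}
\tilde{\tilde \nabla}_k \mathring X^k \;=\; \tilde\nabla_k\mathring X^k + 3\Omega^{-1}\mathring X^k\partial_k\Omega .
\end{equation*}
Second, expanding the Lie derivative via the Leibniz rule gives
\begin{equation*}
\mcL_{\mathring X}(\Omega^{-1}D_{ij})
\;=\; \Omega^{-1}\mcL_{\mathring X}D_{ij} \;-\; \Omega^{-2}\bigl(\mathring X^k\partial_k\Omega\bigr)D_{ij}.
\end{equation*}

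Substituting these two identities into the left-hand side of the rescaled equation \eq{reduced_KID_3} and simplifying, the $\Omega^{-2}(\mathring X^k\partial_k\Omega)D_{ij}$ contributions coming from the Lie-derivative term and from the extra piece of $\tilde{\tilde\nabla}_k\mathring X^k$ cancel exactly, leaving
\begin{equation*}
\mcL_{\mathring X}(\Omega^{-1}D_{ij}) + \tfrac{1}{3}(\Omega^{-1}D_{ij})\tilde{\tilde\nabla}_k\mathring X^k
\;=\; \Omega^{-1}\Bigl[\mcL_{\mathring X}D_{ij} + \tfrac{1}{3}D_{ij}\tilde\nabla_k\mathring X^k\Bigr].
\end{equation*}
Since $\Omega>0$, the rescaled equation vanishes iff the original one does, which yields the ``only if'' implication; the ``if'' direction follows by applying the computation to $(\Omega^2\tilde g_{ij},\Omega^{-1}D_{ij})$ with conformal factor $\Omega^{-1}$.

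No step is genuinely hard here; the only thing to be careful about is using the dimension-three form of the trace of the connection difference tensor and ensuring that the coefficient $\tfrac{1}{3}$ in the reduced KID equation matches the factor of $n=3$ appearing in the conformal transformation of the divergence, which is precisely what makes the cancellation work. This dimensional match is what singles out the weight $-1$ of $D_{ij}$ in the conformal pairing $(h_{ij},D_{ij})\sim(\Omega^2 h_{ij},\Omega^{-1}D_{ij})$ that was already identified earlier in the paper.
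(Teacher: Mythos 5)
Your computation is correct and follows exactly the route the paper takes: the paper simply asserts that \eq{reduced_KID_2} expressed in terms of $\tilde{\tilde g}=\Omega^2\tilde g$ becomes \eq{reduced_KID_3}, and your two identities (the Leibniz rule for $\mcL_{\mathring X}(\Omega^{-1}D_{ij})$ and the $3$-dimensional transformation of the divergence) supply precisely the cancellation that justifies that assertion. The only cosmetic remark is that your final identity, being an exact equality with prefactor $\Omega^{-1}>0$, already gives both directions of the equivalence at once, so the separate ``replace $\Omega$ by $\Omega^{-1}$'' step is not needed.
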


This is consistent with the observation that conformal rescalings of the initial data do not change the isometry class
of the emerging space-time.

\subsection{Some special cases}

Let us finish by taking a look at some special cases:

\subsubsection{Compact initial manifolds}

We  consider  a \textit{compact} initial manifold $(\scri^-, \tilde g)$
and assume that it admits a conformal Killing field $\mathring X$.
Then there exists (cf.\ e.g.\ \cite{kuehnel}) a positive function~$\Omega$ such that the  conformally rescaled metric $\tilde{\tilde g}=\Omega^2 \tilde g$  has one of the following properties:
\begin{itemize}
\item Either  $(\scri^-, \tilde{\tilde g})=(S^3,s_{ij}\mathrm{d}x^i\mathrm{d}x^j )$ is the standard 3-sphere,
\item or $\mathring X$ is a Killing vector field w.r.t.\ $\tilde{\tilde g}$.
\end{itemize}
If $(\scri^-, \tilde{\tilde g})$ is the round 3-sphere all the conformal Killing fields are explicitly known.
In the second case where $\mathring X$ is a Killing vector field w.r.t.\ $\tilde{\tilde g}$ the equation \eq{reduced_KID_3}
simplifies to
\begin{eqnarray}
   \mcL_{\mathring X}(\Omega^{-1} D_{ij}) &=&0
\;.
\label{reduced_KID_4}
\end{eqnarray}
That implies:
\begin{lemma}
 Consider a solution  of the vacuum Einstein equations which admits a  compact spacelike $\scri^-$ and has a non-trivial Killing field. If  $(\scri^-, \tilde g)$ is not conformal to a standard 3-sphere, then there exists a choice of conformal factor so that space-time   Killing vector corresponds to a Killing field (rather than a conformal Killing field) of $(\scri^-, \tilde g)$.
\end{lemma}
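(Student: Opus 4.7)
The plan is to combine Theorem~\ref{KID_eqns_main_hyp_infinity} with the classical result on conformal vector fields of compact Riemannian manifolds that was recalled in the paragraph just preceding the lemma.

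First I would extract from the hypothesis a nontrivial conformal Killing field on $(\scri^-, \tilde g)$. Starting in any admissible gauge realizing $\scri^-$ as conformal boundary with induced metric $\tilde g = h_{ij}\,\mathrm dx^i\mathrm dx^j$, the given nontrivial spacetime Killing field push-forwards to an unphysical Killing field $X$, and Theorem~\ref{KID_eqns_main_hyp_infinity} identifies its boundary restriction $\mathring X^i = \ol X^i$ as a conformal Killing field of $(\scri^-, \tilde g)$ satisfying the reduced KID equation \eqref{reduced_KID_2}. Nontriviality of $\mathring X$ follows from the standard uniqueness argument for the wave system \eqref{wave_X0}--\eqref{wave_Y0} underlying Theorem~\ref{KID_eqns_main}: vanishing of $\ol X$ and $\ol{\nabla_0 X}$ would force $X\equiv 0$, contradicting the nontriviality of the original spacetime Killing vector.

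Next I would invoke the Kühnel-type dichotomy cited in the paragraph above the lemma (cf.~\cite{kuehnel}): on the compact Riemannian manifold $(\scri^-, \tilde g)$ carrying the nontrivial conformal Killing field $\mathring X$, there exists $\Omega>0$ such that either $\tilde{\tilde g}:=\Omega^2 \tilde g$ is (isometric to) the round 3-sphere metric, or $\mathring X$ is a genuine Killing field of $\tilde{\tilde g}$. The hypothesis $(\scri^-, \tilde g)\not\sim (S^3, s)$ rules out the first alternative, so the second holds.

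Finally I would reinterpret this boundary rescaling as a gauge transformation of the MCFE. By Section~\ref{sec_gauge_freedom} and~\ref{sec_realization}, any $\Omega>0$ on $\scri^-$ is a legitimate target for the conformal rescaling of the induced boundary metric, and carrying it out changes neither the underlying physical spacetime nor the physical Killing field represented by $X$. In the new gauge the induced metric is $\tilde{\tilde g}$, and the (same) spacetime Killing vector now restricts, on $\scri^-$, to a Killing field of $(\scri^-, \tilde{\tilde g})$ rather than a mere conformal Killing field. Consistency with the reduced KID equation is automatic: Lemma~\ref{conf_KIDs} transforms \eqref{reduced_KID_2} into \eqref{reduced_KID_3}, which because $\tilde{\tilde\nabla}_k\mathring X^k=0$ collapses to \eqref{reduced_KID_4}, a tautological consequence of the reduced KID being preserved under the gauge change. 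The only substantive ingredient — and the sole potential obstacle — is the invocation of the classical Lelong-Ferrand/Obata/Kühnel-type theorem on conformal vector fields of compact Riemannian manifolds; modulo that, the proof is simply a translation between the boundary-geometric picture of Theorem~\ref{KID_eqns_main_hyp_infinity} and the gauge language of Section~\ref{sec_gauge_freedom}.
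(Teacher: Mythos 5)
Your proposal is correct and follows essentially the same route as the paper: the paper's argument is precisely the K\"uhnel dichotomy (either $(\scri^-,\tilde{\tilde g})$ is the round $3$-sphere or $\mathring X$ is a genuine Killing field of $\tilde{\tilde g}$) combined with the identification of the boundary restriction of the spacetime Killing vector as a conformal Killing field via Theorem~\ref{KID_eqns_main_hyp_infinity} and the conformal gauge freedom of Lemma~\ref{conf_KIDs}. Your added justification of the nontriviality of $\mathring X$ via uniqueness for the wave system \eq{wave_X0}--\eq{wave_Y0} is a detail the paper leaves implicit (it appears explicitly only in the later non-stationarity argument), but it is consistent with the paper's reasoning.
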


\subsubsection{Maximally symmetric space-times}
\label{max_symm}

Let us consider the case where the initial manifold  admits the maximal number of conformal Killing vector fields.
Clearly this is a prerequisite to obtain a maximally symmetric space-time once the evolution problem has been solved.
A connected  3-dimensional Riemannian manifold $(\mathcal{H},h)$ admits at most 10 linearly independent conformal Killing vector fields. 
If equality is attained, $(\mathcal{H},h)$ is known to be locally conformally flat  \cite{SemmelmannHab}.

Let us first consider the compact case. We use a classical result due to Kuiper (cf.\ \cite{kuehnel}):
\begin{theorem} 
 For any $n$-dimensional, simply connected, conformally flat Riemann manifold $(\mathcal{H},h)$, there exists
a conformal immersion $(\mathcal{H},h) \hookrightarrow (S^n, s=s_{ij}\mathrm{d}x^i\mathrm{d}x^j)$, the so-called developing map, which is unique up
to composition with M\"obius transformations.
If $\mathcal{H}$ is compact this map defines a conformal diffeomorphism from $(\mathcal{H},h) $ onto $ (S^n, s)$.
\end{theorem}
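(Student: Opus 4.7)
\proof
The plan is to reinterpret the conformal-flatness hypothesis as a $(\text{M\"ob}(S^n), S^n)$-structure on $\mathcal{H}$, with $\text{M\"ob}(S^n)$ the group of conformal diffeomorphisms of the round sphere, and then to apply the standard developing-map construction for geometric structures on simply connected manifolds. Throughout I tacitly assume $n \geq 3$, which is the only case relevant to this paper; the case $n=2$ would need a separate argument via the uniformization theorem for Riemann surfaces.

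First I would build a suitable atlas. Local conformal flatness means that every point of $\mathcal{H}$ admits a neighborhood $U$ with a conformal diffeomorphism $\varphi : U \to V \subset \mathbb{R}^n$; post-composing with the inverse stereographic projection $\mathbb{R}^n \hookrightarrow S^n$ promotes $\varphi$ to a conformal diffeomorphism $U \to W \subset (S^n,s)$. Cover $\mathcal{H}$ by such charts. On any overlap the transition $\varphi_2 \circ \varphi_1^{-1}$ is a conformal diffeomorphism between open subsets of $(S^n,s)$, and Liouville's rigidity theorem forces it to agree on its domain with the restriction of a \emph{unique} element of $\text{M\"ob}(S^n)$. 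Hence the collection $\{(U_\alpha,\varphi_\alpha)\}$ is a $(\text{M\"ob}(S^n), S^n)$-atlas on $\mathcal{H}$.

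Next I would produce the developing map by analytic continuation along paths. Fix a base point $p_0 \in \mathcal{H}$ together with a distinguished chart $\varphi_0$ around it. For every $q \in \mathcal{H}$, choose a path from $p_0$ to $q$ and a chain of charts of the atlas covering it; at each overlap absorb the transition into a single Möbius transformation, so that the resulting composition assigns a value $D(q) \in S^n$. Because $\mathcal{H}$ is simply connected, the monodromy theorem guarantees that $D(q)$ depends neither on the chain nor on the homotopy class of the path, and since all transitions and charts are conformal, the globally defined map $D : \mathcal{H} \to S^n$ is a conformal local diffeomorphism, i.e.\ a conformal immersion. Any other developing map obtained from a different initial chart differs from $D$ by left composition with a single element of $\text{M\"ob}(S^n)$, which is precisely the uniqueness statement. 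For the compact case: $D(\mathcal{H})$ is closed by compactness and open by the local-diffeomorphism property, hence equals $S^n$; thus $D$ is a surjective local diffeomorphism between closed connected $n$-manifolds, hence a finite covering, and since $S^n$ is simply connected for $n\geq 2$ it must be a diffeomorphism.

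The only genuinely analytic ingredient is Liouville's theorem in the second step: without it the transitions would merely be local conformal maps, no discrete structure group would emerge, and the monodromy argument would have no footing. Everything else is formal machinery—the standard atlas construction, the monodromy principle on simply connected spaces, and the elementary observation that a surjective local diffeomorphism between compact connected manifolds of equal dimension is a covering map.
\qed
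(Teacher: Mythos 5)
Your argument is sound, but note that the paper itself offers no proof of this statement: it is imported verbatim as a classical theorem of Kuiper, with a pointer to the K\"uhnel--Rademacher survey, so there is nothing internal to compare against. What you have written is the standard textbook proof of Kuiper's theorem via a $(G,X)$-structure with $G$ the M\"obius group of $S^n$: local conformal flatness plus Liouville rigidity (for $n\geq 3$) yields an atlas with locally constant M\"obius transition maps, simple connectedness plus the monodromy principle globalizes the charts into a conformal immersion $D:\mathcal{H}\to S^n$ unique up to post-composition with a M\"obius transformation, and in the compact case the open-and-closed argument makes $D$ a surjective proper local diffeomorphism, hence a covering of the simply connected sphere, hence a conformal diffeomorphism. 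Each step is correct, and you rightly isolate Liouville's theorem as the one genuinely analytic input and flag that $n=2$ would require the uniformization theorem instead --- a caveat worth keeping, since the theorem as stated in the paper is phrased for general $n$ even though only $n=3$ is used. The only cosmetic remark is that the uniqueness clause deserves one extra line: two developing maps built from \emph{any} two choices of base chart (not merely from charts around the same base point) are related by a M\"obius transformation, which follows by comparing them on a single chart and invoking Liouville once more; as written you only treat the change of initial chart.
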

Since only the conformal class of the initial manifold  matters we thus may assume  $(\mathcal{H},h)$ for compact $\mathcal{H}$   to be the standard 3-sphere from the outset.
To end up with a maximally symmetric physical space-time containing 10 
independent Killing  fields
one needs to make sure that each of the conformal Killing fields extends to a space-time vector field satisfying the unphysical Killing equations \eq{2_conditions}.
In other words one needs to choose $D_{ij}$ such that the reduced KID equations  \eq{reduced_KID_2}  hold for each and every conformal Killing field on $(S^3,s)$. 
Via a stereographic projection onto Euclidean space one shows that this is only possible
when $D_{ij}$ is proportional to the round sphere metric. But $D_{ij}$ is traceless, and thus needs to vanish.
%
For  data  $(\mathcal{H},h)=(S^3, s)$ and $D_{ij}=0$  one ends up with de Sitter space-time.
This is in accordance with the fact that de Sitter space-time is (up to isometries) the unique maximally symmetric, complete space-time with positive scalar curvature.

The non-compact case is somewhat more involved since the developing map does in general not define a global conformal diffeomorphism
into $(S^n,s)$. For convenience let us therefore make some simplifying assumptions on $(\mathcal{H},h)$ which allow us to apply a result
by Schoen \& Yau \cite{schoen} (we restrict attention to 3 dimensions when stating it):
\begin{theorem}
 Let $(\mathcal{H},h)$ be a complete, simply connected, conformally flat 3-dimensional Riemannian manifold and $\Phi: \mathcal{H} \hookrightarrow S^3$ its developing map.
Assume that $|R(h)|$ is bounded  on $\mathcal{H}$ and that  $d(\mathcal{H})<\frac{1}{3}$.%
\footnote{
For the definition of the   invariant $d(\mathcal{H})$  in terms of the minimal Green's function for the conformal Laplacian  we refer the reader to \cite{schoen}.
}
Then $\Phi$ is one-to-one and gives a conformal diffeomorphism from $\mathcal{H}$ onto  a simply connected domain of~$S^3$.
\end{theorem}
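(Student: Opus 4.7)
The plan is to follow the strategy Schoen and Yau use for injectivity results for developing maps of conformally flat manifolds, breaking the argument into a structural part (properties of the developing map) and an analytic part (limit-set smallness via the conformal Laplacian). Existence of $\Phi$ itself comes for free from Liouville's theorem on conformal maps in dimension $\ge 3$: around any point of $(\mathcal{H},h)$ local conformal flatness gives a local conformal diffeomorphism into $(S^3,s)$, simple connectedness allows these charts to be patched along paths into a globally defined conformal immersion $\Phi:\mathcal{H}\hookrightarrow S^3$, and the patching is unique up to a M\"obius transformation. The role of completeness of $(\mathcal{H},h)$ and of the hypothesis $d(\mathcal{H})<1/3$ is to force $\Phi$ to be globally injective and to control its image.

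The heart of the matter is proving injectivity. First I would introduce the conformal Laplacian $L_h=-8\Delta_h+R(h)$ (the boundedness of $|R(h)|$ makes this a uniformly elliptic operator with controlled zero-order term) and its minimal positive Green's function $G(x,y)$; this is precisely the object through which the invariant $d(\mathcal{H})$ is defined in Schoen--Yau. Pulling back the round metric via $\Phi$, one writes $\Phi^*s=u^{4}h$ for a positive function $u$, and the Yamabe-type equation satisfied by $u$ relates the analytic behaviour of $u$ to the hypothetical ``limit set'' $\Lambda:=S^3\setminus\Phi(\mathcal{H})$ (more precisely, the part of $S^3$ not covered by $\Phi$, taken with multiplicity in case $\Phi$ fails to be injective). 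The key step is a capacity estimate: if $\Phi$ were not injective, or if its image were too small, the standard conformal-capacity analysis of $\Lambda$ in $S^3$ (using the Sobolev inequality on $\mathcal{H}$ and the pointwise expansion of $G$ near its singularity) would force the $L_h$-capacity of $\Lambda$ to be bounded below by an explicit constant, giving a lower bound on $d(\mathcal{H})$ that is incompatible with $d(\mathcal{H})<1/3$.

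The main obstacle is the sharp translation between the analytic invariant $d(\mathcal{H})$ (defined via the Green's function expansion) and the conformal capacity/Hausdorff content of the limit set $\Lambda\subset S^3$; this is where the curvature bound $|R(h)|\le C$ enters decisively, since it is what converts Green's-function estimates into genuine capacity estimates. Once injectivity of $\Phi$ is established, the image is a connected open subset of $S^3$ (as $\Phi$ is a local diffeomorphism), and simple connectedness of the image follows because $\Phi:\mathcal{H}\to\Phi(\mathcal{H})$ is then a covering map from a simply connected space onto a manifold, hence a homeomorphism, so $\Phi(\mathcal{H})$ is simply connected. I would not try to reprove the sharp capacity/Green's function estimates from scratch; the honest plan is to cite the Schoen--Yau machinery and verify that the hypotheses of the present theorem plug into it.
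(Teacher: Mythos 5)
The paper offers no proof of this statement at all: it is quoted verbatim as a result of Schoen and Yau and used as a black box via the citation \cite{schoen}, which is exactly what your plan reduces to, so the two ``approaches'' coincide. One small caveat: the injectivity mechanism in \cite{schoen} for this simply connected case is a Green's-function comparison and maximum-principle argument (making the pole into an asymptotically flat end via $G^{4/(n-2)}h$ and comparing with the flat Green's function pulled back through $\Phi$), rather than the limit-set capacity estimate you sketch, but since you explicitly defer to the Schoen--Yau machinery this does not affect the validity of your proposal.
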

%
We conclude, again, that the emerging space-time will be maximally symmetric iff $D_{ij}=0$, and will be (isometric to) a part of de Sitter space-time.

\subsubsection{Non-existence of stationary space-times}

For $(\mcM,g,\Theta)$ to contain a timelike isometry there must exist a vector field $X$ satisfying the unphysical Killing equations \eq{2_conditions}
which is null on $\scri^-$ (it cannot be timelike since $\ol X^0=0$),
\begin{eqnarray*}
0 = \ol g_{\mu\nu} \ol X^{\mu}\ol X^{\nu} = h_{ij} \ol X^{i}\ol X^{j} \quad \Longrightarrow \quad \ol X^i =0
\;.
\end{eqnarray*}
But then the preceding considerations show that $\ol X^{\mu}=\ol{\nabla_0 X^{\mu}}=\ol Y = \ol{\nabla_0 Y}=0$, and solving the wave equations for
$X$ and $Y$, \eq{wave_X0} and \eq{wave_Y0}, yields that $X$ vanishes identically. It follows that
there is no vacuum space-time  with $\lambda>0$ which is stationary near $\scri^-$. (Compare~\cite[Section~4]{F_lambda}.)

\vspace{1.2em}
\noindent {\textbf {Acknowledgements}}
I would like to thank my advisor Piotr T. Chru\'sciel for  useful discussions, comments and for reading a first draft of this article.
Moreover, I am grateful to Helmut Friedrich for pointing out reference \cite{schoen} to me.
Supported in part by the  Austrian Science Fund (FWF): P 24170-N16.

\appendix

\section{Equivalence between the CWE and the MCFE}
\label{app_alternative}

\subsection{Conformal wave equations (CWE)}

In \cite{ttp} the MCFE \eq{conf1}-\eq{conf6} have been rewritten as a system of conformal wave equations (CWE),
\begin{eqnarray}
 \Box^{(H)}_{ g} L_{\mu\nu}&=&  4 L_{\mu\kappa} L_{\nu}{}^{\kappa} -  g_{\mu\nu}| L|^2
  - 2\Theta d_{\mu\sigma\nu}{}^{\rho}  L_{\rho}{}^{\sigma}
 + \frac{1}{6}\nabla_{\mu}\nabla_{\nu} R
  \label{cwe1}
  \;,
  \\
  \Box_g  s  &=& \Theta| L|^2 -\frac{1}{6}\nabla_{\kappa} R\,\nabla^{\kappa}\Theta  - \frac{1}{6} s  R
  \label{cwe2}
  \;,
  \\
  \Box_{ g}\Theta &=& 4 s-\frac{1}{6} \Theta  R
  \label{cwe3}
  \;,
  \\
  \Box^{(H)}_g d_{\mu\nu\sigma\rho}
  &=& \Theta d_{\mu\nu\kappa}{}^{\alpha}d_{\sigma\rho\alpha}{}^{\kappa}
   - 4\Theta d_{\sigma\kappa[\mu} {}^{\alpha}d_{\nu]\alpha\rho}{}^{\kappa}
  + \frac{1}{2}R d_{\mu\nu\sigma\rho}
  \label{cwe4}
  \;,
  \\
  R^{(H)}_{\mu\nu}[g] &=& 2L_{\mu\nu} + \frac{1}{6} R g_{\mu\nu}
  \label{cwe5}
  \;.
\end{eqnarray}
Here
\begin{equation}
 R^{(H)}_{\mu\nu} :=  R_{\mu\nu} -  g_{\sigma(\mu}\hat\nabla_{\nu)} H^{ \sigma}
 \label{ricci_riccired}
 \;,
\end{equation}
denotes the reduced Ricci tensor.
The reduced wave-operator $\Box^{(H)}_g$
(which is needed to obtain a PDE-system  with a diagonal principal part) 
is defined via its action on covector fields $v_{\lambda}$,
\begin{eqnarray}
 \Box_g^{(H)}v_{\lambda} &:=& \Box_g v_{\lambda} - g_{\sigma[\lambda}(\hat\nabla_{\mu]} H^{ \sigma})v^{\mu}
  +(2L_{\mu\lambda} -R^{(H)}_{\mu\lambda} + \frac{1}{6}Rg_{\mu\lambda})v^{\mu}
   \;,
\end{eqnarray}
 and similar formulae hold for higher-valence covariant tensor fields.

In the following we want to show that a solution of the CWE
in the gauge \eq{gauge_conditions_compact} is a solution of the MCFE if and only if the constraint equations \eq{constr1}-\eq{constr8} hold on $\scri^-$,
\begin{eqnarray}
& h^{ij} D_{ij}=0\;, \quad \tilde\nabla^j D_{ij}=0
\;,
\label{constr1*}
&
\\
 &\ol g_{00} =-1\;, \quad \ol g_{0i}=0\;, \quad \ol g_{ij} = h_{ij}\;, \quad \ol{\partial_0 g_{\mu\nu}}=0\;,
\label{constr2*}
&
\\
& \ol \Theta =0\;, \quad \ol{\partial_0 \Theta} = \sqrt{\frac{\lambda}{3}}\;,
\label{constr3*}
&
\\
&\ol s=0\;, \quad \ol{\partial_0 s} =  \sqrt{\frac{\lambda}{48}} \,\tilde R\;,
\label{constr4*}
&
\\
&\ol L_{ij} = \tilde L_{ij} \;, \quad \ol L_{0i}=0\;, \quad \ol L_{00} = \frac{1}{4}\tilde R \;,
\label{constr5*}
&
\\
& \ol{\partial_0 L_{ij}} =  -\sqrt{\frac{\lambda}{3}} \, D_{ij}\;, \quad
  \ol{\partial_0 L_{0i}} =\frac{1}{4}\tilde\nabla_i\tilde R  \;, \quad  \ol{\partial_0 L_{00}} =0\;,
\label{constr6*}
&
\\
&  \ol d_{0i0j} = D_{ij}\;,\quad \ol d_{0ijk}  =  \sqrt{\frac{3}{\lambda}} \tilde C_{ijk}\;,
\label{constr7*}
&
\\
& \ol{\partial_{0} d_{0i0j}}= \sqrt{\frac{3}{\lambda}}\tilde B_{ij}\;, \quad
\ol{\partial_{0} d_{0ijk}} = 2\tilde \nabla_{[j}D_{k]i}\;.
\label{constr8*}
 &
\end{eqnarray}

\subsection{An intermediate result}
In close analogy to \cite[Theorem~3.7]{ttp} one establishes the following result:
\begin{theorem}
\label{inter-thm}
Assume we have been given data ($\mathring g_{\mu\nu}$, $\mathring K_{\mu\nu}$, $\mathring s$, $\mathring S$, $\mathring \Theta$, $\mathring \Omega$, $\mathring L_{\mu\nu}$, $\mathring M_{\mu\nu}$, $\mathring d_{\mu\nu\sigma}{}^{\rho}$, $\mathring D_{\mu\nu\sigma}{}^{\rho}$)
on a spacelike hypersurface $\mathcal{H}$  and a gauge source function $R$, such that $\mathring g_{\mu\nu}$ is the restriction to $\mathcal{H}$ of a Lorentzian metric, $\mathring K_{\mu\nu}$,  $\mathring L_{\mu\nu}$ and  $\mathring M_{\mu\nu}$ are symmetric, $ \mathring L \equiv \mathring L_{\mu}{}^{\mu}  = \overline R/6$,  $  \mathring M_{\mu}{}^{\mu} =\ol{ \partial_0 R}/6$, and such that $\mathring d_{\mu\nu\sigma}{}^{\rho}$ and $\mathring D_{\mu\nu\sigma}{}^{\rho}$  satisfy all the algebraic properties of the Weyl tensor.
Suppose further that there exists a solution ($g_{\mu\nu}$, $s$, $\Theta$, $L_{\mu\nu}$, $d_{\mu\nu\sigma}{}^{\rho}$)  of the CWE \eq{cwe1}-\eq{cwe5} with gauge source function $R$ which induces the above data on $\mathcal{H}$,
\begin{eqnarray*}
 &\ol g_{\mu\nu} = \mathring g_{\mu\nu}\;, \quad \ol s = \mathring s \;, \quad \ol\Theta =\mathring \Theta\;, \quad\ol L_{\mu\nu} = \mathring L_{\mu\nu}\;,\quad
\ol d_{\mu\nu\sigma}{}^{\rho} = \mathring  d_{\mu\nu\sigma}{}^{\rho} \;,&
\\
& \ol {\partial_0 g_{\mu\nu}} = \mathring K_{\mu\nu}\;, \quad \ol {\partial_0 s} = \mathring S \;, \quad \ol{\partial_0 \Theta} =\mathring \Omega\;, \quad\ol{\partial_0  L_{\mu\nu}} = \mathring M_{\mu\nu}\;,\quad
\ol {\partial_0 d_{\mu\nu\sigma}{}^{\rho}} = \mathring  D_{\mu\nu\sigma}{}^{\rho} \;,&
\end{eqnarray*}
 and fulfills the following conditions:
\begin{enumerate}
 \item The MCFE \eq{conf1}-\eq{conf4} and their covariant derivatives are fulfilled on $\mathcal{H}$;
 \item  equation \eq{conf5} holds at one point on $\mathcal{H}$;
\item  $\ol W_{\mu\nu\sigma}{}^{\rho}[g]=\ol \Theta \,\ol d_{\mu\nu\sigma}{}^{\rho}$ and
 $\ol {\nabla_{0} W_{\mu\nu\sigma}{}^{\rho}}[g]=\ol{\nabla_{0} (\Theta \, d_{\mu\nu\sigma}{}^{\rho})}$;
 \item  the wave-gauge vector $H^{\sigma}$ and its first- and second-order covariant derivatives $\nabla_{\mu} H^{\sigma}$ and $\nabla_{\mu}\nabla_{\nu}H^{\sigma}$ vanish on $\mathcal{H}$;
\item the covector  field $\zeta_{\mu}\equiv -4(\nabla_{\nu}L_{\mu}{}^{\nu} - \frac{1}{6}\nabla_{\mu}R)$ and its covariant derivative
 $\nabla_{\nu}\zeta_{\mu}$  vanish on $\mathcal{H}$.
\end{enumerate}
Then
\begin{enumerate}
 \item[a)] $H^{\sigma}=0$ and $R_g=R$ (where $R_g$ denotes the Ricci scalar of $g_{\mu\nu}$);
\item[b)] $L_{\mu\nu}$ is the Schouten  tensor of $g_{\mu\nu}$;
\item[c)] $\Theta d_{\mu\nu\sigma}{}^{\rho}$ is the Weyl tensor of $g_{\mu\nu}$;
\item[d)] ($g_{\mu\nu}$, $s$, $\Theta$, $L_{\mu\nu}$, $d_{\mu\nu\sigma}{}^{\rho}$)  solves the MCFE \eq{conf1}-\eq{conf6} in the ($H^{\sigma}=0$, $R_g=R$)-gauge.
\end{enumerate}
The conditions 1.-5.\ are also necessary for d) to be true.
\end{theorem}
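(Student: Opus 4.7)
The proof plan is to adapt the propagation-of-constraints argument of \cite[Theorem~3.7]{ttp} to the present setting. The idea is to introduce a collection of \emph{violation fields} which measure the failure of the MCFE, and then show that these fields satisfy a closed homogeneous system of linear wave equations whose initial data on $\mathcal{H}$ vanish by virtue of hypotheses 1--5. Uniqueness for linear wave equations then forces the violation fields to vanish on $\mathrm{D}^+(\mathcal{H})$, which is exactly (a)--(d). The natural violation fields are
\begin{equation*}
 H^{\sigma}, \qquad F_{\mu\nu\sigma}{}^{\rho} := W_{\mu\nu\sigma}{}^{\rho}[g] - \Theta\,d_{\mu\nu\sigma}{}^{\rho},
\end{equation*}
\begin{equation*}
 P_{\mu\nu} := \nabla_{\mu}\nabla_{\nu}\Theta + \Theta L_{\mu\nu} - s\, g_{\mu\nu}, \qquad Q_{\mu} := \nabla_{\mu}s + L_{\mu\nu}\nabla^{\nu}\Theta,
\end{equation*}
\begin{equation*}
 E := 2\Theta s - \nabla_{\mu}\Theta\nabla^{\mu}\Theta - \lambda/3, \qquad \zeta_{\mu} := -4\bigl(\nabla_{\nu}L_{\mu}{}^{\nu} - \tfrac{1}{6}\nabla_{\mu}R\bigr).
\end{equation*}

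First I would handle $H^{\sigma}$ alone: applying the contracted second Bianchi identity to \eq{cwe5} and using the definition \eq{ricci_riccired} produces an identity of the form $\Box_g H^{\sigma} = (\text{linear in } H^{\sigma}, \nabla H^{\sigma})$, a linear homogeneous wave equation. Hypothesis~4 supplies vanishing Cauchy data of order high enough to conclude $H^{\sigma}\equiv 0$, which immediately yields $R_g = R$ and, via \eq{cwe5}, identifies $L_{\mu\nu}$ with the Schouten tensor, giving (a) and (b). Next I would turn to $F_{\mu\nu\sigma}{}^{\rho}$: the wave equation \eq{cwe4} combined with the wave equation satisfied by $W[g]$ under the second Bianchi identity (now legitimate since $L$ is the Schouten tensor) yields a homogeneous wave equation for $F$; condition~3 provides $\ol F = 0$ and $\ol{\nabla_0 F}=0$, so $F\equiv 0$, which is (c). The remaining MCFE \eq{conf1}--\eq{conf4} are then encoded in $P_{\mu\nu}$, $Q_{\mu}$ and $\zeta_{\mu}$. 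Using the CWE \eq{cwe1}--\eq{cwe3}, together with (a)--(c), one derives a coupled homogeneous wave system for the quadruple $(P_{\mu\nu}, Q_{\mu}, \zeta_{\mu}, \nabla_{\nu}\zeta_{\mu})$; hypotheses~1 and~5 provide vanishing Cauchy data, so the quadruple vanishes identically. Finally, \eq{conf5} is a first-order constraint: using \eq{conf3} and \eq{conf4}, already propagated, one computes $\nabla_{\mu}E = 0$, so $E$ is constant, and condition~2 fixes this constant to be zero.

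The main obstacle I anticipate is Step~3: verifying that the coupled wave system for $(P,Q,\zeta)$ (and $F$) genuinely closes \emph{homogeneously}, i.e.\ that every apparent source term can be re-expressed in terms of the violation fields themselves (possibly with lower order derivatives), using the CWE and the Bianchi identities. This requires careful bookkeeping: source terms involving $\nabla L$ and $\nabla d$ must be rewritten via $\zeta_{\mu}$ and $F_{\mu\nu\sigma}{}^{\rho}$ respectively, and the algebraic Weyl-tensor symmetries of $d$ (assumed in the hypotheses) are crucial for the divergences of $d$ to match those arising from $W[g]$. The trace normalisations $\mathring L = \ol R / 6$ and $\mathring M_{\mu}{}^{\mu} = \ol{\partial_0 R}/6$ enforce consistency of the initial data with the gauge condition $R_g = R$ at zeroth and first order in $t$, which is needed for the wave equation for $H^{\sigma}$ to launch correctly.

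For the necessity direction I would simply read off each of 1--5 from a genuine MCFE solution in the wave-map gauge: 1 is immediate; 2 is \eq{conf5}; 3 is \eq{conf6} together with its $\nabla_0$-derivative; 4 is the wave-map gauge itself; and 5 follows because contracting \eq{conf2} and applying the contracted second Bianchi identity to the Schouten tensor identifies $\zeta_{\mu}$ with a combination that vanishes identically on any MCFE solution, after which $\nabla_\nu \zeta_\mu = 0$ follows by differentiation.
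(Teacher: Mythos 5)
Your overall strategy (violation fields, homogeneous linear wave system, vanishing Cauchy data, uniqueness) is exactly the one the paper relies on --- the paper gives no independent proof here but establishes the theorem ``in close analogy'' to \cite[Theorem~3.7]{ttp}. However, your proposed \emph{order} of elimination contains a genuine gap. The first step, ``handle $H^{\sigma}$ alone,'' does not work: applying the contracted second Bianchi identity to \eq{cwe5} and using \eq{ricci_riccired} gives, schematically,
\begin{equation*}
\Box_g H_{\mu} \,=\, \zeta_{\mu} \,+\, \nabla_{\mu}\bigl(2L - \tfrac{1}{3}R\bigr) \,+\, (\text{terms linear in } H,\nabla H)\;,
\end{equation*}
since $\nabla^{\nu}L_{\mu\nu}$ can only be rewritten via $\zeta_{\mu}$ and the trace $L=g^{\mu\nu}L_{\mu\nu}$ is not yet known to equal $R/6$ off the initial surface. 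So the wave equation for $H^{\sigma}$ is \emph{not} homogeneous in $H^{\sigma}$ alone; it couples to $\zeta_{\mu}$ and to the trace anomaly $L-R/6$ (whose own wave equation comes from the trace of \eq{cwe1}). This is precisely why hypothesis~5 and the normalisations $\mathring L=\ol R/6$, $\mathring M_{\mu}{}^{\mu}=\ol{\partial_0 R}/6$ appear among the assumptions: they are Cauchy data for a \emph{coupled} subsystem $(H^{\sigma},\zeta_{\mu},L-R/6,\dots)$, not mere consistency checks. Your scheme also leaves unexplained why hypothesis~4 demands the vanishing of the \emph{second} covariant derivatives $\ol{\nabla\nabla H}$; for a second-order wave equation for $H$ alone only $\ol H$ and $\ol{\nabla_0 H}$ would be needed. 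The extra order is required because $\nabla\nabla H$ enters the source terms of the wave equations for the other violation fields, which again signals that the system cannot be triangularised in the order you propose.

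A second, related gap: your list of violation fields omits those measuring the failure of \eq{conf1} and \eq{conf2}, namely $\Lambda_{\mu\nu\sigma}:=\nabla_{\rho}d_{\mu\nu\sigma}{}^{\rho}$ and $\varkappa_{\mu\nu\sigma}:=2\nabla_{[\sigma}L_{\nu]\mu}-\nabla_{\rho}\Theta\,d_{\nu\sigma\mu}{}^{\rho}$ (the latter is used explicitly in Appendix~\ref{app_alternative} of the paper). These are not consequences of $P_{\mu\nu}$, $Q_{\mu}$ and $\zeta_{\mu}$: even after (b) and (c) are established, the Bianchi identity only yields $\varkappa=\Theta\,\Lambda$, which does not force either field to vanish (and one cannot divide by $\Theta$ where it has zeros). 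Both must be adjoined to the homogeneous system with their own wave equations, with initial data supplied by hypothesis~1. Your concluding step for \eq{conf5} (showing $\nabla_{\mu}E=0$ and using hypothesis~2 to fix the constant) and your necessity argument are fine.
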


\subsection{Applicability of Theorem~\ref{inter-thm} on $\scri^-$}

We now consider the case where $\mathcal{H}=\scri^-$. Using the gauge \eq{gauge_conditions_compact}
we want to show that the hypotheses of Theorem~\ref{inter-thm} are fulfilled
 by any tuple ($g_{\mu\nu}$, $s$, $\Theta$, $L_{\mu\nu}$, $d_{\mu\nu\sigma}{}^{\rho}$)
which satisfies the  constraint equations \eq{constr1*}-\eq{constr8*} and the CWE.

For $R=0$ the CWE reduce to
\begin{eqnarray}
 \Box^{(H)}_{ g} L_{\mu\nu}&=&  4 L_{\mu\kappa} L_{\nu}{}^{\kappa} -  g_{\mu\nu}| L|^2
  - 2\Theta d_{\mu\sigma\nu}{}^{\rho}  L_{\rho}{}^{\sigma}
  \label{cwe1*}
  \;,
  \\
  \Box_g  s  &=& \Theta| L|^2
  \label{cwe2*}
  \;,
  \\
  \Box_{ g}\Theta &=& 4 s
  \label{cwe3*}
  \;,
  \\
  \Box^{(H)}_g d_{\mu\nu\sigma\rho}
  &=& \Theta d_{\mu\nu\kappa}{}^{\alpha}d_{\sigma\rho\alpha}{}^{\kappa}
   - 4\Theta d_{\sigma\kappa[\mu} {}^{\alpha}d_{\nu]\alpha\rho}{}^{\kappa}
  \label{cwe4*}
  \;,
  \\
  R^{(H)}_{\mu\nu}[g] &=& 2L_{\mu\nu}
  \label{cwe5*}
  \;.
\end{eqnarray}

First of all note that 
$ \ol L=0=\ol R/6$ and $\ol{\partial_0 L}=0= \ol{\partial_0 R}/6$, as required.
Moreover \eq{constr3*} implies that \eq{conf5} is satisfied on $\scri^-$,
i.e.\ it remains to verify that the hypotheses 1.\ and 3.-5.\ in Theorem~\ref{inter-thm} are fulfilled.

Recall that in our gauge the only non-vanishing Christoffel symbols on $\scri^-$ are $\ol \Gamma^k_{ij} = \tilde \Gamma^k_{ij}$,
and that this implies that the action of $\nabla_0$ and $\partial_0$ as well as the action of $\nabla_i$ and $\tilde\nabla_i$  coincides on $\scri^-$.

\subsubsection{Vanishing of $\ol H$, $\ol {\nabla H}$ and $\ol{\nabla\nabla H}$}
\label{sect_vanishing_H}

We have
\begin{eqnarray}
 \ol H^{0}&\equiv& \ol g^{\mu\nu}(\ol\Gamma^0_{\mu\nu} - \hat\Gamma^0_{\mu\nu}) \,=\, 0
\;,
\\
 \ol H^{i}&\equiv& \ol g^{\mu\nu}(\ol\Gamma^i_{\mu\nu} - \hat\Gamma^i_{\mu\nu}) \,=\, 0
\;.
\end{eqnarray}
Equation \eq{cwe5*} can be written as
\begin{equation}
 R_{\mu\nu} - g_{\sigma(\mu}\hat\nabla_{\nu)}H^{\sigma} = 2L_{\mu\nu}
\;.
\label{cwe5_rew}
\end{equation}
Invoking $\ol H^{\sigma}=0$ that gives
\begin{eqnarray*}
 \ol R_{0 0} + \ol{\partial_{0}H^{0}} &=& 2\ol L_{0 0}
\;,
\\
 \ol R_{0i} - \frac{1}{2}\ol g_{ ij}\ol{\partial_{0}H^{j}} &=& 2\ol L_{0i }
\;,
\\
\ol R_{ij} &=& 2\ol L_{ij}
\;.
\end{eqnarray*}
On the other hand, with \eq{constr2*} we find
\begin{eqnarray*}
 \ol R_{00} &=& - \ol{\partial_0 \Gamma^k_{0k}} \,=\, -\frac{1}{2}\ol g^{kl}\ol{\partial_0\partial_0g_{kl}}
\;,
\\
\ol R_{0i} &=&  - \ol{\partial_0 \Gamma^k_{ik}} \,=\, 0
\;,
\\
\ol R_{ij} &=& \ol{\partial_0 \Gamma^0_{ij}} + \tilde R_{ij} \,=\, \frac{1}{2}\ol{\partial_0\partial_0g_{ij}} + \tilde R_{ij}
\;.
\end{eqnarray*}
Taking  \eq{constr5*} into account, we conclude that
\begin{equation}
 \ol{\partial_0\partial_0 g_{ij}} \,=\, 2\tilde R_{ij} - \ol g_{ij}\tilde R
\;,
\label{tt_gij}
\end{equation}
as well as $\ol{\partial_0 H^{\sigma}}=0$, and we end up with
\begin{equation}
 \ol {\nabla_{\mu} H^{\sigma}} \,=\, 0\;.
\end{equation}
Note that this implies
\begin{eqnarray}
 0&=&  \ol{\partial_0 H^0} \,=\, \ol g^{\mu\nu}\ol{\partial_0 \Gamma^0_{\mu\nu}} \,=\, \frac{1}{2}\ol{\partial_0\partial_0g_{00}} - \frac{1}{2}\tilde R
\;,
\\
 0 &=& \ol{\partial_0 H^k} \,=\,   \ol g^{\mu\nu}\ol{\partial_0  \Gamma^k_{\mu\nu}} \,=\, -\ol g^{kl}\ol{\partial_0\partial_0 g_{0l}}
\label{tt_g0i}
\;,
\end{eqnarray}
i.e.\
\begin{eqnarray}
 \ol{\partial_0\partial_0 g_{00}} = \tilde R\;, \quad \ol{\partial_0\partial_0 g_{0i}}=0\;.
\end{eqnarray}
We give a list of the transverse derivatives of the Christoffel symbols on $\scri^-$,
\begin{eqnarray}
& \ol{\partial_0\Gamma^0_{00}} \,=\, -\frac{1}{2}\tilde R\;, \quad \ol{\partial_0\Gamma^0_{ij}} \,=\,
 \ol g_{jk} \ol{\partial_0\Gamma^k_{0i}}
\,=\,\tilde R_{ij} - \frac{1}{2}\ol g_{ij} \tilde R
\;,&
\label{trans_christ_1}
\\
& \ol{\partial_0\Gamma^0_{0i}} \,=\,  \ol{\partial_0\Gamma^k_{00}} \,=\,  \ol{\partial_0\Gamma^k_{ij}} \,=\, 0
\;.&
\label{trans_christ_2}
\end{eqnarray}
Using \eq{cwe5_rew} that yields with $\ol H^{\sigma} =0 = \ol {\nabla_{\mu} H^{\sigma}}$ the relation
\begin{equation}
 \ol{\partial_0 R_{\mu\nu}} - \ol g_{\sigma(\mu}\ol{\partial_{\nu)}\partial_0H^{\sigma}} = 2\ol{\partial_0L_{\mu\nu}}
\;,
\label{cwe5_rew_trans}
\end{equation}
and thus
\begin{eqnarray*}
 \ol{\partial_0 R_{00}} +\ol{\partial_{0}\partial_0H^{0}} &=& 2\ol{\partial_0L_{00}}
 \;,
\\
\ol{\partial_0 R_{0i}} - \frac{1}{2}\ol g_{ ij}\ol{\partial_{0}\partial_0H^{j}} &=& 2\ol{\partial_0L_{0i}}
 \;,
\\
 \ol{\partial_0 R_{ij}} &=& 2\ol{\partial_0L_{ij}}
 \;.
\end{eqnarray*}
We compute
\begin{eqnarray*}
  \ol{\partial_0 R_{00}} &= & - \ol{\partial_{0}\partial_0\Gamma^{k}_{0k}}
\,=\,  -\frac{1}{2}\ol g^{kl} \ol{\partial_0\partial_0\partial_0 g_{kl}}
\;,
\\
 \ol{\partial_0 R_{0i}} &=& \underbrace{ \tilde\nabla_{k}\ol{\partial_0\Gamma^{k}_{0i}}}_{=0} - \ol{\partial_{0}\partial_0\Gamma^{k}_{ik} }
\,=\, -\frac{1}{2}\tilde\nabla_i(\ol g^{kl} \ol{\partial_0\partial_0 g_{kl}}) \,=\, \frac{1}{2}\tilde\nabla_i\tilde R
\;,
\\
 \ol{\partial_0 R_{ij}}  &=&  \ol{\partial_{0}\partial_0\Gamma^{0}_{ij}}
 \,=\, \frac{1}{2}\ol{\partial_0\partial_0\partial_0 g_{ij}}
\;.
\end{eqnarray*}
From \eq{constr6*} we deduce that
\begin{equation}
  \ol{\partial_0\partial_0\partial_0 g_{ij}} \,=\, -4\sqrt{\frac{\lambda}{3}} \, D_{ij}
\;,
\label{ttt_gij}
\end{equation}
from which we obtain $\ol{\partial_0\partial_0 H^{\sigma}}=0$, and thus
\begin{equation}
 \ol {\nabla_{\mu}\nabla_{\nu} H^{\sigma}} \,=\,0
\;.
\end{equation}

\subsubsection{Vanishing of $\ol \zeta$ and $\ol{\nabla\zeta}$}

In our gauge we have
\begin{equation}
 \zeta_{\mu} = -4\nabla_{\alpha}L_{\mu}{}^{\alpha}
\;.
\end{equation}
We invoke
 \eq{constr5*} and \eq{constr6*} to obtain
\begin{eqnarray}
 \ol \zeta_0 &=& 4\ol{\partial_{0}L_{00}}  -4\ol g^{kl} \nabla_{k}\ol L_{0l} \,=\, 0
\;,
\\
\ol\zeta_i &=& 4\ol{\partial_{0}L_{0i}} -4\ol g^{kl}\tilde\nabla_{k}\ol L_{il} \,=\, 0
\;.
\end{eqnarray}
The computation of $\ol{\nabla_{0}\zeta_{\mu}}$ requires the knowledge of certain second-order transverse derivatives of $\ol L_{\mu\nu}$ which we compute
from the CWE \eq{cwe1*}. Since $\ol H^{\sigma} =0=\ol  {\nabla_{\mu} H^{\sigma}}$ we have
\begin{eqnarray*}
& \ol{\Box_g L_{\mu\nu}} =  \ol{\Box^{(H)}_g L_{\mu\nu}} = 4\ol  L_{\mu\kappa}\ol  L_{\nu}{}^{\kappa} - \ol  g_{\mu\nu}| \ol L|^2 \quad \Longleftrightarrow&
\\
&\ol{\nabla_0\nabla_0  L_{\mu\nu}}  = \Delta_{\tilde g}\ol L_{\mu\nu}-  4\ol  L_{\mu\kappa}\ol  L_{\nu}{}^{\kappa} +\ol  g_{\mu\nu}[\ol L_{k}{}^l\ol L_l{}^k + (\ol L_{00})^2]
\;,&
\end{eqnarray*}
whence
\begin{eqnarray}
 \ol{\nabla_0\nabla_0  L_{00}} & =& \frac{1}{4}\Delta_{\tilde g}\tilde R - |\tilde R|^2  + \frac{1}{2}\tilde R^2
\;,
\\
\ol{\nabla_0\nabla_0  L_{0i}}  &=&0
\;.
\end{eqnarray}
From
\begin{eqnarray*}
 \ol{ \nabla_0\zeta_{\mu}} &=&  4  \ol{\nabla_{0}\nabla_0L_{0\mu}}  -4 \ol g^{kl} \tilde\nabla_{k}\ol{\nabla_0L_{l\mu}} -4  \ol R_{0k\mu}{}^{l}\ol L_{l}{}^{k}
 - 4 \ol R_{00}\ol L_{0\mu}
\end{eqnarray*}
and
\begin{eqnarray}
 \ol R_{0k0}{}^l &=& -\tilde R_{k}{}^l + \frac{1}{2}\delta_k{}^l\tilde R
\;,
\label{riemann1}
\\
 \ol R_{0ki}{}^l  &=& 0
\;,
\label{riemann2}
\end{eqnarray}
we conclude that
\begin{eqnarray}
 \ol{ \nabla_0\zeta_{0}} &=&  \frac{3}{2}\tilde R^2 -4 |\tilde R|^2   -4  \ol R_{0k0}{}^{l}\ol L_{l}{}^{k}
 \,=\,   0
\;,
\\
 \ol{ \nabla_0\zeta_{i}} &=&    4 \sqrt{\frac{\lambda}{3}}  \tilde\nabla^{j}D_{ij}-4  \ol R_{0ki}{}^{l}\ol L_{l}{}^{k} \,=\, 0
\;.
\end{eqnarray}

\subsubsection{Validity of the MCFE \eq{conf1}-\eq{conf4} and their transverse derivatives on $\scri^-$}

The independent components of $ \ol{\nabla_{\rho} d_{\mu\nu\sigma}{}^{\rho}}$, which is antisymmetric in its first two indices, trace-free and satisfies the
first Bianchi identity, are
\begin{equation*}
   \ol{\nabla_{\rho} d_{ijk}{}^{\rho}} \quad \text{and} \quad    \ol{\nabla_{\rho} d_{0ij}{}^{\rho}}
\end{equation*}
(similarly for its transverse derivatives).

It follows from \eq{spatial_tr}, \eq{constr7*}, \eq{constr8*} and \eq{constr1*} that
\begin{eqnarray}
  \ol{\nabla_{\rho} d_{ijk}{}^{\rho}} &=&   \tilde \nabla^{l}\ol  d_{ijkl} -   \ol{\nabla_{0} d_{0kji}}
\,=\,  2\ol g_{k[i}\tilde \nabla^{l}D_{j]l}
\,=\, 0
\;,
\\
 \ol{\nabla_{\rho} d_{0ij}{}^{\rho}}  &=&   \nabla^{k} \ol d_{0ijk} +  \ol{\nabla_{0} d_{0i0j}}
\,=\, 0
\;.
\end{eqnarray}

We consider the corresponding transverse derivatives. With \eq{riemann1} and \eq{riemann2} we find
\begin{eqnarray*}
  \ol{\nabla_0\nabla_{\rho} d_{ijk}{}^{\rho}} &=& \ol{\nabla_{0}\nabla_0 d_{0kij}} + \tilde\nabla^{l}\ol{\nabla_0 d_{ijkl}}
- 2\ol R_{0[ j|0}{}^l\ol d_{0|i] kl}  + \ol R_{0k0}{}^{l } \ol d_{0lij}  + \ol R_{00}\ol d_{0kji}
\;,
\\
 \ol{\nabla_0\nabla_{\rho} d_{0ij}{}^{\rho}}  &=& \ol{\nabla_{0}\nabla_0 d_{0i0j}} +  \tilde \nabla^{k}\ol{\nabla_0 d_{0ijk}}
- \ol R_{0}{}^{k}{}_{ 0}{}^{l} \ol d_{ik jl}
+\ol  R_{0j0}{}^{k} D_{ik} -\ol  R_{0 0}  D_{ij}
\;.
\end{eqnarray*}
The second-order transverse derivatives of the rescaled Weyl tensor follow from the CWE \eq{cwe4*},
\begin{eqnarray*}
 \ol{\Box_g d_{\mu\nu\sigma\rho} }\,=\,   \ol{ \Box^{(H)}_g d_{\mu\nu\sigma\rho}}
  &=& 0 \quad \Longleftrightarrow \quad
\ol{\nabla_0\nabla_0 d_{\mu\nu\sigma\rho}} \,=\, \Delta_{\tilde g}  \ol d_{\mu\nu\sigma\rho}
\;,
\end{eqnarray*}
hence
\begin{eqnarray}
 \ol{\nabla_0\nabla_0 d_{0ijk}} &=& \Delta_{\tilde g}  \ol d_{0ijk} \,=\, \sqrt{\frac{3}{\lambda}}  \Delta_{\tilde g} \tilde C_{ijk}
\;,
\\
  \ol{\nabla_0\nabla_0 d_{0i0j}} &=& \Delta_{\tilde g}  \ol d_{0i0j} \,=\, \Delta_{\tilde g}  D_{ij}
\;.
\end{eqnarray}
The Bianchi identities together with the identity
\begin{eqnarray}
 \tilde R_{ijkl} &\equiv &  2 \ol g_{i[k}\tilde R_{l]j} - 2\ol g_{j[k}\tilde R_{l]i}   - \tilde R \ol g_{i[k}\ol g_{l]j}
\;,
\end{eqnarray}
which holds in 3 dimensions,
imply the following relations for Cotton and Bach tensor,
\begin{eqnarray*}
\tilde C_{[ijk]} &=&  \tilde C^j{}_{ij} \,=\, \tilde\nabla^k\tilde C_{kij}\,=\, 0
 \;,
\\
  \tilde\nabla_{[i}\tilde C_{j]kl}
  &=& \tilde\nabla_{[l}\tilde C_{k]ji} +\tilde R_{ij[l}{}^m \tilde L_{k]m} + \tilde R_{kl[i}{}^m \tilde L_{j]m}
 \;,
\\
 \tilde\nabla^j \tilde B_{ij} &=& \tilde R^{kl}\tilde C_{kli}
 \;,
 \\
 \tilde\nabla_{[i}\tilde B_{j]k}
  &=&  - \frac{1}{2}\Delta_{\tilde g}\tilde C_{kji}  +  \tilde R_{[j}{}^l \tilde C_{i]kl}
   -\frac{1}{2} \tilde R_{k}{}^{l} \tilde C_{lij}
-g_{k[i}\tilde C^l{}_{j]}{}^{m}  \tilde R_{lm}
 + \frac{1}{4}\tilde R  \tilde C_{kji}
 \;.
\end{eqnarray*}
With \eq{spatial_tr}, \eq{constr1*}, \eq{constr7*} and \eq{constr8*} we then obtain
\begin{eqnarray*}
  \ol{\nabla_0\nabla_{\rho} d_{ijk}{}^{\rho}} &=&
    \sqrt{\frac{3}{\lambda}}\Big( 2\ol g_{k[i}\tilde\nabla^{l} \tilde B_{j]l}
    -  2\tilde\nabla_{[i}\tilde B_{j]k}
     -  (\Delta_{\tilde g}   - \frac{\tilde R  }{2}  ) \tilde C_{kji}
+ 2   \tilde R_{[ j}{}^l \tilde C_{i]kl}- \tilde R_{k}{}^{l } \tilde C_{lij}
  \Big)
\\
 &=& 0
\;,
\\
 \ol{\nabla_0\nabla_{\rho} d_{0ij}{}^{\rho}}  &=& 0
\;.
\end{eqnarray*}


Set
\begin{equation}
 \Xi_{\mu\nu} :=  \nabla_{\mu}\nabla_{\nu}\Theta + \Theta L_{\mu\nu} - s g_{\mu\nu}
 \;.
\end{equation}
To compute $\ol \Xi_{00}$ we need to know the value of $\ol{\nabla_0\nabla_0 \Theta}$ which can be determined from the CWE \eq{cwe3*},
\begin{equation}
 \ol{\Box_g \Theta} = 4\ol s \quad \Longleftrightarrow \quad \ol{\nabla_0\nabla_0\Theta}=0
\;.
\label{tt-Theta}
\end{equation}
Invoking  \eq{constr3*}-\eq{constr4*} we then find
\begin{eqnarray}
  \ol\Xi_{ij} &=&0
 \;,
\\
  \ol\Xi_{0i} &=& \ol{ \nabla_{i}\nabla_{0}\Theta} \,=\, 0
 \;,
\\
  \ol\Xi_{00} &=&  \ol{\nabla_{0}\nabla_{0}\Theta} \,=\, 0
 \;.
\end{eqnarray}
To calculate the transverse derivative of $\Xi_{\mu\nu}$ on $\scri^-$ we need to determine the third-order transverse derivative of $\Theta$
\begin{equation}
 \ol{\nabla_0\Box_g \Theta} = 4\ol{\nabla_0 s} \quad \Longleftrightarrow \quad  \ol{\nabla_{0}\nabla_0 \nabla_0\Theta}
 =- \sqrt{\frac{\lambda}{12}}\tilde R
\;.
\end{equation}
One then straightforwardly verifies with \eq{riemann1} and the constraint equations
\begin{eqnarray*}
  \ol{\nabla_0\Xi_{ij}} &=& \ol{\nabla_{i}\nabla_{j} \nabla_0\Theta} + \ol R_{0i 0j}\ol{\nabla_0\Theta}+ \ol L_{ij}\ol{\nabla_0 \Theta}  -\ol{ \nabla_0s}\ol  g_{ij}
 \,=\,0
\;,
\\
  \ol{\nabla_0\Xi_{0i}} &=& \ol{ \nabla_{i}\nabla_{0}\nabla_0\Theta} + \ol L_{0i}\ol{\nabla_0 \Theta} \,=\, 0
\;,
\\
  \ol{\nabla_0\Xi_{00}} &=& \ol{\nabla_{0}\nabla_{0} \nabla_0\Theta}+ \ol L_{00}\ol{\nabla_0 \Theta}  +\ol{ \nabla_0s} \,=\, 0
\;.
\end{eqnarray*}


Set
\begin{equation}
 \Upsilon_{\mu} :=  \nabla_{\mu} s + L_{\mu\nu}\nabla^{\nu}\Theta
 \;.
\end{equation}
We observe that by \eq{constr3*}-\eq{constr5*}
\begin{eqnarray}
\ol \Upsilon_{0} &=&  \ol{\nabla_{0} s} - \ol L_{00}\ol{\nabla_0 \Theta} \,=\, 0
 \;,
\\
 \ol \Upsilon_{i} &=&  0
 \;.
\end{eqnarray}
To compute the corresponding transverse derivatives on $\scri^-$ we first of all need to calculate  $\ol{\nabla_0\nabla_0 s}$, which follows from
\eq{cwe2*},
\begin{equation}
 \ol{\overset{}{\Box_g} s} =0 \quad \Longleftrightarrow \quad \ol{\nabla_0\nabla_0s}=0
;.
\end{equation}
Employing further the constraint equations and \eq{tt-Theta} we then deduce
\begin{eqnarray*}
 \ol{\nabla_0 \Upsilon_{0}} &=&  \ol{\nabla_0\nabla_{0} s} - \ol{\nabla_0L_{00}\nabla_{0}\Theta}
  -\ol{ L_{00}\nabla_{0}\nabla_0\Theta} \,=\,0
\;,
\\
 \ol{\nabla_0 \Upsilon_{i}} &=&  \nabla_{i}\ol{\nabla_0 s} - \ol{\nabla_0L_{0i}\nabla_{0}\Theta}
  +\ol L_{ij}\nabla^{j}\ol{\nabla_0\Theta} \,=\,0
\;.
\end{eqnarray*}


Set
\begin{equation}
 \varkappa_{\mu\nu\sigma} :=  2\nabla_{[\sigma} L_{\nu]\mu} -\nabla_{\rho}\Theta \, d_{\nu\sigma\mu}{}^{\rho}
 \;.
\end{equation}
Due to the symmetries $\varkappa_{\mu(\nu\sigma)}=0$, $\varkappa_{[\mu\nu\sigma]}=0$ and $\varkappa_{\nu\mu}{}^{\nu}=0$ (since $\overline\zeta_{\mu}=0$ and $L=0$)
its independent components on the initial surface are
\begin{equation*}
 \ol\varkappa_{ijk} \quad \text{and} \quad   \ol\varkappa_{ij0}
\;.
\end{equation*}
Since also $\overline{\nabla_0 \zeta_{\mu}}=0$ an analogous statement holds true for $\ol{\nabla_0\varkappa_{\mu\nu\sigma}}$.
We find with \eq{constr3*} and \eq{constr5*}-\eq{constr7*}
\begin{eqnarray}
 \ol \varkappa_{ijk} &=&  2\nabla_{[k} \ol L_{j]i} -\ol{\nabla_{0}\Theta} \, \ol d_{0ijk}
\,=\,  0
 \;,
\\
 \ol\varkappa_{ij0} &=&  2\ol{\nabla_{[0} L_{j]i} }+\ol{\nabla_{0}\Theta} \, \ol d_{0i0j}
\,=\,  0
 \;.
\end{eqnarray}
Before we proceed let us first determine the second-order transverse derivative of $L_{ij}$ on $\scri^-$. From the CWE
\eq{cwe1*} we obtain
\begin{eqnarray}
  \ol{\Box_gL_{ij}} =   \ol{\Box^{(H)}_gL_{ij}} 4\tilde L_{ik}\tilde L_{j}{}^k - \ol g_{ij}|\tilde L|^2 - \ol g_{ij} (\ol L_{00})^2 \quad
\Longleftrightarrow
\nonumber
\\
 \ol{\nabla_0\nabla_0L_{ij}} = \Delta_{\tilde g} \tilde L_{ij}  -4\tilde L_{ik}\tilde L_{j}{}^k + \ol g_{ij}( |\tilde L|^2 + \frac{1}{16} \tilde R^2)
\;.
\end{eqnarray}
For the transverse derivatives we then find with \eq{riemann1}, \eq{riemann2} and \eq{tt-Theta} and the constraint equations
\begin{eqnarray*}
  \ol{\nabla_0\varkappa_{ijk} }&=&
 2\tilde \nabla_{[k}\ol{\nabla_{|0|} L_{j]i} }  -\ol{\nabla_{0}\Theta} \, \ol{\nabla_0d_{0ijk}}
\,=\,0
\;,
\\
  \ol{\nabla_0\varkappa_{ij0} }&=&
\ol{ \nabla_{0}\nabla_0 L_{ij}}  - \tilde \nabla_{j}\ol{\nabla_0 L_{0i}}  -\ol R_{0j 0}{}^{k}\ol  L_{ i k} - \ol R_{0i0j} \ol L_{00}
 +\ol{\nabla_{0}\Theta} \, \ol{\nabla_0d_{0i0j}}
\\
&=&
\ol{ \nabla_{0}\nabla_0 L_{ij}}  -\frac{1}{4}\tilde\nabla_i \tilde \nabla_{j}\tilde R
+\tilde L_{j }{}^{k}\tilde  L_{ i k}     - \frac{1}{16}\tilde R^2 \ol g_{ij}
 +\tilde B_{ij}
\,=\,0
\;,
\end{eqnarray*}
where we have used that
\begin{equation}
\tilde B_{ij} = -\Delta_{\tilde g}\tilde L_{ij} + \frac{1}{4}\tilde\nabla_i \tilde \nabla_{j}\tilde R
 -   \ol g_{ij}|\tilde L|^2
+ 3\tilde L_{ik}\tilde L_{j}{}^k
\;.
\end{equation}

\subsubsection{Vanishing of $\ol W_{\mu\nu\sigma}{}^{\rho}-\ol\Theta\, \ol d_{\mu\nu\sigma}{}^{\rho}$
 and $\ol{\nabla_0 (W_{\mu\nu\sigma}{}^{\rho} - \Theta \, d_{\mu\nu\sigma}{}^{\rho}) }$}

The independent components of the conformal Weyl tensor in adapted coordinates are
\begin{equation*}
 \ol W_{0ij}{}^{k}\quad \text{and} \quad \ol W_{0i0}{}^{j}
\;.
\end{equation*}
Using the definition of the Weyl tensor
\begin{equation*}
 W_{\mu\nu\sigma}{}^{\rho} \equiv  R_{\mu\nu\sigma}{}^{\rho} -  2\left(g_{\sigma[\mu} L_{\nu]}{}^{\rho}  - \delta_{[\mu}{}^{\rho}L_{\nu]\sigma} \right)
\end{equation*}
we observe that by \eq{riemann1}, \eq{riemann2} and \eq{constr5*} we have
\begin{eqnarray*}
  \ol W_{0ij}{}^{k} &=&  \ol g_{ij} \ol L_{0}{}^{k}  - \delta_{i}{}^{k}\ol L_{0j} \,=\,0
\;,
\\
\ol W_{0i0}{}^{j} &=&   \ol R_{0i0}{}^{j}+ \ol  L_{i}{}^{j}  - \delta_{i}{}^{j}\ol L_{00} \,=\, 0
 \;.
\end{eqnarray*}

To derive  expressions for the transverse derivatives recall the formulae \eq{christoffel}, \eq{trans_christ_1}-\eq{trans_christ_2} for the Christoffel symbols and their transverse derivatives on~$\scri^-$.
Since, by \eq{tt_gij}, \eq{tt_g0i} and \eq{ttt_gij}, we further have
\begin{eqnarray*}
 \ol{\partial_0\partial_0\Gamma^k_{ij}} &=& \frac{1}{2}\ol g^{kl}(\tilde\nabla_i\ol{\partial_0\partial_0g_{jl}} + \tilde\nabla_j\ol{\partial_0\partial_0g_{il}}
 - \tilde\nabla_l\ol{\partial_0\partial_0g_{ij}})
\\
 &=& 2\tilde\nabla_{(i}\tilde R_{j)}{}^k  - \delta_{(i}{}^k \tilde\nabla_{j)}\tilde R
 - \tilde\nabla^k\tilde R_{ij} + \frac{1}{2} \ol g_{ij} \tilde\nabla^k\tilde R
 \;,
\\
\ol{\partial_0\partial_0\Gamma^j_{i0} } &=& \frac{1}{2}\ol g^{jk}(\ol{\partial_0\partial_0\partial_0g_{ik}} + \tilde\nabla_i\ol{\partial_0\partial_0 g_{0k}}
 -  \tilde\nabla_k\ol{\partial_0\partial_0 g_{0i}})
\,=\, -2\sqrt{\frac{\lambda}{3}} D_i{}^j
\;,
\end{eqnarray*}
we find that
\begin{eqnarray*}
 \ol{\nabla_0{R_{0ij}{}^{k}}} \,=\,   \ol{\partial_0{R_{0ij}{}^{k}}}
&=& \tilde\nabla_i\ol{\partial_0\Gamma^k_{0j}} - \ol{\partial_0\partial_0\Gamma^k_{ij} }
\\
&=& -\tilde\nabla_j\tilde R_i{}^k + \frac{1}{2}\delta_i{}^k\tilde\nabla_j\tilde R
 + \tilde\nabla^k\tilde R_{ij} - \frac{1}{2} \ol g_{ij} \tilde\nabla^k\tilde R
\;,
\\
\ol{\nabla_0{R_{0i 0}{}^{j}}}  \,=\,    \ol{\partial_0{R_{0i 0}{}^{j}}}
   &=&  - \ol{\partial_0\partial_0\Gamma^j_{i0} }
\,=\,2\sqrt{\frac{\lambda}{3}} D_i{}^j
\;.
\end{eqnarray*}
Hence
\begin{eqnarray*}
 \ol{\nabla_0 W_{0ij}{}^{k}} &=&  \ol{\nabla_0{R_{0ij}{}^{k}}} + \ol g_{ij} \ol {\nabla_0 L_{0}{}^{k}}
 - \delta_{i}{}^{k}\ol {\nabla_0 L_{0j}}
\,=\,   \tilde\nabla^k\tilde L_{ij}-\tilde\nabla_j\tilde L_i{}^k  \,=\, \tilde C_{ij}{}^k
\;,
\\
 \ol{\nabla_0 W_{0i 0}{}^{j}} &=&  \ol{\nabla_0{R_{0i 0}{}^{j}}} +  \ol {\nabla_0 L_{i}{}^{j}}
- \delta_{i}{}^{j}\ol {\nabla_0 L_{00}}
\,=\,  \sqrt{\frac{\lambda}{3}} \, D_{i}{}^j
\;,
\end{eqnarray*}
and we end up with
\begin{eqnarray*}
  \ol {\nabla_0(W_{0ij}{}^{k}- \Theta  d_{0ij}{}^{k})} &=&
   \ol {\nabla_0W_{0ij}{}^{k}} -\ol{\nabla_0\Theta}\, \ol d_{0ij}{}^{k}\,=\, 0
\;,
\\
  \ol {\nabla_0(W_{0i0}{}^{j}- \Theta d_{0i0}{}^{j})} &=&
   \ol {\nabla_0W_{0i0}{}^{j}} - \ol{\nabla_0\Theta}\, \ol d_{0i0}{}^{j}\,=\, 0
\;,
\end{eqnarray*}
which completes the proof that Theorem~\ref{inter-thm} is applicable supposing that the initial data for the CWE satisfy the constraint equations
\eq{constr1*}-\eq{constr8*} on $\scri^-$.

\begin{theorem}
\label{thm_equivalence}
 Let us suppose we have been given a Riemannian metric $h_{ij}$ and a smooth tensor field $D_{ij}$ on $\scri^-$.
A smooth solution $(g_{\mu\nu},L_{\mu\nu},d_{\mu\nu\sigma}{}^{\rho},\Theta,s)$ of the CWE \eq{cwe1*}-\eq{cwe5*} to the future of $\scri^-$ with initial data
\begin{eqnarray*}
(\ol g_{\mu\nu} = \mathring g_{\mu\nu}, \enspace\ol{\partial_0 g_{\mu\nu}}=\mathring K_{\mu\nu},\enspace\ol L_{\mu\nu}=\mathring L_{\mu\nu},\enspace\ol{\partial_0 L_{\mu\nu}}=\mathring M_{\mu\nu}, \enspace
\ol d_{\mu\nu\sigma}{}^{\rho}= \mathring d_{\mu\nu\sigma}{}^{\rho},
\\
\ol{\partial_0 d_{\mu\nu\sigma}{}^{\rho}}=\mathring D_{\mu\nu\sigma}{}^{\rho},\enspace\ol\Theta = \mathring \Theta=0,\enspace\ol{\partial_0\Theta}=\mathring\Omega,\enspace\ol{s}=\mathring s=0,\ol{\partial_0 s}=\mathring S)
\end{eqnarray*}
where
$\mathring g_{ij}=h_{ij}$ and the trace- and divergence-free part of $\mathring d_{0i0j}=D_{ij}$  are the free data, is a solution of the MCFE \eq{conf1}-\eq{conf6} in the
$$(R=0, \ol s=0,  \ol g_{00}=-1, \ol g_{0i}=0, \hat g_{\mu\nu} = \mathring g_{\mu\nu})\text{-wave-map gauge}$$
 if and only if the initial data have their usual algebraic properties and solve the constraint equations \eq{constr1*}-\eq{constr8*}.
The function $\Theta$ is positive in some neighborhood to the future of $\scri^-$, and $\mathrm{d}\Theta \ne 0$ on $\scri^-$.
\end{theorem}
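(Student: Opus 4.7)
The plan is to reduce Theorem~\ref{thm_equivalence} to the intermediate result Theorem~\ref{inter-thm}, whose five hypotheses I would verify one by one under the assumption that the initial data satisfy the constraints \eq{constr1*}-\eq{constr8*}. The entire verification takes place on $\scri^-$ in the fixed $(R=0,\ol s=0,\ol g_{00}=-1,\ol g_{0i}=0,\hat g_{\mu\nu}=\ol g_{\mu\nu})$-wave-map gauge, using only the CWE \eq{cwe1*}-\eq{cwe5*} together with the prescribed data. The converse implication is already contained in Section~\ref{subsec_gauge}, where these constraints were originally extracted from the MCFE, so the bulk of the work lies in the forward direction.

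I would begin with hypothesis 2, equation \eq{conf5} at a single point, which is immediate from $\ol g^{00}=-1$, $\ol\Theta=\ol s=0$, and $\ol{\nabla_0\Theta}=\sqrt{\lambda/3}$. For hypothesis 4, the vanishing of $H^\sigma$ and its first two covariant derivatives, the starting point $\ol H^\sigma=0$ comes directly from $\hat g_{\mu\nu}=\ol g_{\mu\nu}$ together with $\ol g_{00}=-1$, $\ol g_{0i}=0$. Rewriting the CWE \eq{cwe5*} as $R_{\mu\nu}-g_{\sigma(\mu}\hat\nabla_{\nu)}H^\sigma=2L_{\mu\nu}$ and substituting the constraints \eq{constr5*} allows one to solve algebraically for $\ol{\partial_0\partial_0 g_{\mu\nu}}$, from which $\ol{\partial_0 H^\sigma}=0$ follows. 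One transverse differentiation of this equation, combined with \eq{constr6*}, yields $\ol{\partial_0\partial_0\partial_0 g_{\mu\nu}}$ and hence $\ol{\partial_0\partial_0 H^\sigma}=0$. Since the only non-vanishing Christoffel symbols on $\scri^-$ are $\ol\Gamma^k_{ij}=\tilde\Gamma^k_{ij}$, these suffice for $\ol{\nabla_\mu H^\sigma}=\ol{\nabla_\mu\nabla_\nu H^\sigma}=0$.

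With step 4 available, hypothesis 5 (vanishing of $\zeta_\mu$ and $\ol{\nabla_0\zeta_\mu}$) reduces to substitution: the zeroth-order piece uses only \eq{constr5*}-\eq{constr6*} plus $\tilde\nabla^j D_{ij}=0$, while the first-order piece requires $\ol{\nabla_0\nabla_0 L_{\mu\nu}}$, obtained from \eq{cwe1*} since at this point $\Box_g$ and $\Box^{(H)}_g$ agree on $\scri^-$. Hypothesis 3 (vanishing of $\ol W_{\mu\nu\sigma}{}^\rho-\ol\Theta\,\ol d_{\mu\nu\sigma}{}^\rho$ and its first transverse derivative) is pure bookkeeping: I would compute the independent components of the Weyl tensor using the explicit transverse Christoffel derivatives produced in step 4 (in particular $\ol{\partial_0\partial_0\Gamma^k_{ij}}$ and $\ol{\partial_0\partial_0\Gamma^j_{i0}}$) and compare the result with \eq{constr7*}-\eq{constr8*}.

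The main obstacle is hypothesis 1, validity of \eq{conf1}-\eq{conf4} and their transverse derivatives on $\scri^-$. The zeroth-order pieces are immediate from the constraints, and for the first transverse derivatives the CWE \eq{cwe2*}-\eq{cwe4*} convert second-order transverse derivatives of $s$, $\Theta$ and $d_{\mu\nu\sigma}{}^\rho$ into spatial Laplacians. The subtlest such identity is $\ol{\nabla_0\nabla_\rho d_{ijk}{}^\rho}=0$, which requires the classical three-dimensional Cotton-Bach identities $\tilde\nabla^j\tilde B_{ij}=\tilde R^{kl}\tilde C_{kli}$ and a formula expressing $\tilde\nabla_{[i}\tilde B_{j]k}$ in terms of $\Delta_{\tilde g}\tilde C_{kji}$ plus curvature corrections; these in turn follow from the three-dimensional identity $\tilde R_{ijkl}=2\ol g_{i[k}\tilde R_{l]j}-2\ol g_{j[k}\tilde R_{l]i}-\tilde R\ol g_{i[k}\ol g_{l]j}$. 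Once all five hypotheses are in place, Theorem~\ref{inter-thm} yields the MCFE. Finally, positivity of $\Theta$ to the future of $\scri^-$ follows at once from $\ol\Theta=0$ and $\ol{\nabla_0\Theta}=\sqrt{\lambda/3}>0$ by continuity, and $\mathrm{d}\Theta\ne 0$ on $\scri^-$ by the same token.
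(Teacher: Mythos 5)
Your proposal is correct and follows essentially the same route as the paper: the appendix verifies hypotheses 1--5 of Theorem~\ref{inter-thm} in exactly the order of difficulty you describe (condition 2 from \eq{constr3*}, condition 4 from \eq{cwe5*} and its transverse derivative, condition 5 using $\ol{\nabla_0\nabla_0 L_{\mu\nu}}$ from \eq{cwe1*}, condition 3 from the explicit transverse Christoffel derivatives, and condition 1 via the three-dimensional Cotton--Bach identities), with necessity coming from the derivation of the constraints in Section~\ref{subsec_gauge}. The positivity of $\Theta$ near $\scri^-$ from $\ol{\partial_0\Theta}=\sqrt{\lambda/3}>0$ is likewise the paper's argument.
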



\begin{thebibliography}{[10]}
\bibitem{beig} R. Beig and P. T. Chru\'{s}ciel, \emph{Killing {I}nitial {D}ata}, Class.\ Quantum Grav. \textbf{14} (1997), A83--A92.
\bibitem{c_lecture}  P. T. Chru\'sciel, \textit{An introduction to the Cauchy problem for the Einstein equations},  lecture notes, Roscoff (2010),
 \url{http://homepage.univie.ac.at/piotr.chrusciel/teaching/Cauchy/Roscoff.pdf}.
\bibitem{CP1} P. T. Chru\'sciel and T.-T. Paetz, \textit{KIDs like cones}, Class. Quantum Grav. \textbf{30} (2013) 235036.
\bibitem{F_lambda}  H. Friedrich, \textit{Existence and structure of past asymptotically simple solutions of Einstein's field equations with positive cosmological constant}, J. Geom. Phys. \textbf{3} (1986) 101-117.
 \bibitem{F3} H. Friedrich, \textit{Conformal Einstein evolution}, in: \textit{The conformal structure of space-time -- Geometry, analysis, numerics}, J. Frauendiener, H. Friedrich (eds.), Berlin, Heidelberg, Springer, 2002, pp.\ 1-50.
\bibitem{kreiss} H.-O. Kreiss and O. E. Ortiz, \textit{Some mathematical and numerical questions connected with first and second order time-dependent systems of partial differential equations},  in: \textit{The conformal structure of space-time -- Geometry, analysis, numerics}, J. Frauendiener, H. Friedrich (eds.),
              Berlin, Heidelberg, Springer, 2002, pp.\ 359-370.
\bibitem{kuehnel} W. K\"uhnel and H.-B. Rademacher, \textit{Conformal transformations of pseudo-Riemannian manifolds}, (2007),
\url{http://www.igt.uni-stuttgart.de/LstDiffgeo/Kuehnel/preprints/surv4.pdf}.
\bibitem{moncrief} V. Moncrief, \textit{Spacetime symmetries and linearization stability of the Einstein equations I}, J. Math. Phys \textbf{16} (1975) 493-498.
\bibitem{ttp} T.-T. Paetz, \textit{Conformally covariant systems of wave equations and their equivalence to Einstein's field equations}, (2013), arXiv:1306.6204 [gr-qc].
\bibitem{ttp2} T.-T. Paetz, \textit{KIDs prefer special cones}, (2013), arXiv:1311.3692 [gr-qc].
\bibitem{schoen} R. Schoen and S.-T. Yau, \textit{Conformally flat manifolds, Kleinian groups and scalar curvature}, Invent. Math. \textbf{92}, (1981) 47-71.
\bibitem{SemmelmannHab}
U.~Semmelmann, \emph{Conformal Killing forms on Riemannian manifolds},
  Habilitationsschrift (2001),   \url{http://www.igt.uni-stuttgart.de/LstGeo/Semmelmann/Publikationen/killing20.pdf}.
\end{thebibliography}
\end{document}